\newcommand{\detstar}{\det\,\!\!^{\ast}}
\providecommand{\U}[1]{\protect\rule{.1in}{.1in}}
\newtheorem{theorem}{Theorem}
\newtheorem{corollary}[theorem]{Corollary}
\newtheorem{lemma}[theorem]{Lemma}
\newtheorem{proposition}[theorem]{Proposition}
\newenvironment{proof}[1][Proof]{\noindent\textbf{#1.} }{\ \rule{0.5em}{0.5em}}
\begin{document}

\title{Complexity and heights of tori}

\author{Gautam Chinta \and Jay Jorgenson \and Anders Karlsson
\footnote{The first and second authors acknowledge support provided by grants from
the National Science Foundation and the Professional Staff Congress of the City University
of New York.  The third author received support from SNSF grant 200021$\_$132528$/$1.}}
\maketitle
\begin{abstract}
We prove detailed asymptotics for the number of spanning trees, called
complexity, for a general class of discrete tori as the parameters
tend to infinity. The proof uses in particular certain ideas and
techniques from an earlier paper \cite{CJK10}. Our asymptotic formula
provides a link between the complexity of these graphs and the height
of associated real tori, and allows us to deduce some corollaries
on the complexity thanks to certain results from analytic number theory.
In this way we obtain a conjectural relationship between complexity
and regular sphere packings.
\end{abstract}

\section{Introduction}

The number of spanning trees $\tau(G)$, called \emph{the} \emph{complexity},
of a finite graph $G$ is an invariant which is of interest in several
sciences: network theory, statistical physics, theoretical chemistry,
etc. Via the well-known matrix-tree
theorem of Kirchoff, the complexity equals
the determinant of the combinatorial Laplacian $\Delta_{G}$ divided
by the number of vertices.

For compact Riemannian manifolds $M$ there is an analogous invariant
$h(M)$, \emph{the} \emph{height}, defined as the negative of the logarithm
of the zeta-regularized determinant of the Laplace-Beltrami operator,
and which is of interest for quantum physics. The analogy between the
height and complexity has been commented on by Sarnak in \cite{S90}.

In statistical physics it is of interest to study the asymptotics
of the complexity, and other spectral invariants, for certain families
of graphs. Important cases to study are various subgraphs of the standard
lattice $\mathbb{Z}^{d}$. An instance of this is to study discrete
tori, corresponding to periodic boundary conditions, as the parameters
tend to infinity, see \cite{DD88}, \cite{CJK10}, and references
therein. It is shown in \cite{CJK10} that in the asymptotics of the
complexity of discrete tori, the height of an associated real torus
appears as a constant.

In the present paper we study discrete tori of a more general type,
defined as follows. Let $\Lambda$ be an invertible $r\times r$ matrix
with all entries being integers. This matrix defines a lattice
$\Lambda\mathbb{Z}^{r}$ in $\mathbb{R}^{r}.$ We associate to the group
quotient with standard generators \[
\Lambda\mathbb{Z}^{r}\backslash\mathbb{Z}^{r}\] its Cayley graph,
which we call a \emph{discrete torus}.  In other words, two elements
$x$ and $y$ in $\Lambda\mathbb{Z}^{r}\backslash\mathbb{Z}^{r}$ are
\emph{adjacent}, denoted $x\sim y,$ if they differ by $\pm1$ in
exactly one of the coordinates and equal everywhere else (everything
mod $\Lambda\mathbb{Z}^{r}$ of course).

Let
$0=\lambda_{0}<\lambda_{1}\leq...\leq\lambda_{\left|\det\Lambda\right|-1}$
be the eigenvalues of the combinatorial Laplacian
$\Delta_{\Lambda\mathbb{Z}^{r}\backslash\mathbb{Z}^{r}}$ of the
discrete torus -- see (\ref{def:combinatorial-Laplacian}) for a
definition of
$\Delta_{\Lambda\mathbb{Z}^{r}\backslash\mathbb{Z}^{r}}$.  Define
$\det^\ast\Delta_{\Lambda\mathbb{Z}^{r}\backslash\mathbb{Z}^{r}}$ to
be the product of the nonzero eigenvalues of the Laplacian: \[
\detstar\Delta_{\Lambda\mathbb{Z}^{r}\backslash\mathbb{Z}^{r}}:=\lambda_{1}\lambda_{2}...\lambda_{\left|\det\Lambda\right|-1}.\]
Note that the trivial eigenvalue is removed. We will nevertheless
refer to
$\detstar\Delta_{\Lambda\mathbb{Z}^{r}\backslash\mathbb{Z}^{r}}$ as
the determinant of the Laplacian.  For simplicity we will
mostly assume that $\det\Lambda>0$.

\begin{theorem} \label{thmmain} Let $\{\Lambda_{n}\}$ be a sequence
of $r\times r$ integer matrices.  Suppose that
$\det\Lambda_{n}\rightarrow\infty$
and $\Lambda_{n}/(\det\,\Lambda_{n})^{1/r}\rightarrow A\in SL_{r}(\mathbb{R}).$
Then as $n\rightarrow\infty,$
\[
\log\detstar\Delta_{\Lambda_{n}\mathbb{Z}^{r}
\backslash\mathbb{Z}^{r}}=c_{r}\det\Lambda_{n}+\frac{2}{r}\log\det\Lambda_{n}+
\log\detstar\Delta_{A\mathbb{Z}^{r}\backslash\mathbb{R}^{r}}+o(1)
\]
where\[
c_{r}=\log2r-\int_{0}^{\infty}e^{-2rt}(I_{0}(2t)^{r}-1)\frac{dt}{t}.\]
The definition of $\log\detstar\Delta_{A\mathbb{Z}^{r}\backslash\mathbb{R}^{r}}$
will be recalled in section 3 below.  
\end{theorem}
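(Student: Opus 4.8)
The plan is to realize $\log\detstar\Delta$ as minus the derivative at $s=0$ of the spectral zeta function and to analyze the associated heat trace via a theta-inversion formula. Writing the Laplace eigenvalues on the discrete torus through the characters of the finite abelian group $\mathbb{Z}^r/\Lambda\mathbb{Z}^r$, one has $\lambda_\xi=\sum_{j=1}^r\bigl(2-2\cos 2\pi\xi_j\bigr)$, where $\xi$ runs over representatives of the dual group, identified with $\Lambda^{-t}\mathbb{Z}^r/\mathbb{Z}^r$. Setting $\theta_\Lambda(t)=\sum_\xi e^{-t\lambda_\xi}$ (the sum including the zero eigenvalue), I would use
\[
\zeta_\Lambda(s)=\frac{1}{\Gamma(s)}\int_0^\infty\bigl(\theta_\Lambda(t)-1\bigr)t^{s-1}\,dt,\qquad \log\detstar\Delta_{\Lambda\mathbb{Z}^r\backslash\mathbb{Z}^r}=-\zeta_\Lambda'(0).
\]
The first key step is the trace formula. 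Factoring $e^{-t\lambda_\xi}=e^{-2rt}\prod_j e^{2t\cos 2\pi\xi_j}$, expanding each factor by the generating identity $e^{2t\cos\phi}=\sum_{k\in\mathbb{Z}}I_k(2t)e^{ik\phi}$, and summing over the characters by orthogonality yields
\[
\theta_\Lambda(t)=\det\Lambda\;e^{-2rt}\sum_{m\in\mathbb{Z}^r}\prod_{j=1}^r I_{(\Lambda m)_j}(2t).
\]

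Next I would split the right-hand side into its diagonal term $m=0$ and the off-diagonal sum, so that $\theta_\Lambda(t)-1=\det\Lambda\,e^{-2rt}I_0(2t)^r+\bigl(\mathrm{offdiag}(t)-1\bigr)$, which is a genuine decomposition since the diagonal term is exactly $\det\Lambda$ times a universal function. For the diagonal term I would write $e^{-2rt}I_0(2t)^r=e^{-2rt}+e^{-2rt}\bigl(I_0(2t)^r-1\bigr)$: the first piece has Mellin transform $\Gamma(s)(2r)^{-s}$, which after division by $\Gamma(s)$ and differentiation at $s=0$ contributes $\log 2r$ to the coefficient, while the second piece is regular at $s=0$ and contributes $-\int_0^\infty e^{-2rt}(I_0(2t)^r-1)\,dt/t$; together they give the volume term $c_r\det\Lambda$ exactly. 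As a check, for $r=1$ this yields $c_1=0$, consistent with the exact value $\detstar\Delta_{C_N}=N^2$.

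The off-diagonal sum is where the hypothesis $\Lambda_n/(\det\Lambda_n)^{1/r}\to A$ enters. Using the uniform local–central–limit asymptotics $e^{-2t}I_k(2t)\approx(4\pi t)^{-1/2}e^{-k^2/4t}$, I would show
\[
\det\Lambda\,e^{-2rt}\sum_{m\neq0}\prod_{j}I_{(\Lambda m)_j}(2t)\approx\frac{\det\Lambda}{(4\pi t)^{r/2}}\sum_{m\neq0}e^{-|\Lambda m|^2/4t},
\]
whose Mellin transform reproduces, via Poisson summation, the heat trace of the real torus $\Lambda_n\mathbb{Z}^r\backslash\mathbb{R}^r$ (the subtracted pure power $\det\Lambda/(4\pi t)^{r/2}$ being scaleless, it regularizes to $0$). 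Hence the off-diagonal part contributes $\log\detstar\Delta_{\Lambda_n\mathbb{Z}^r\backslash\mathbb{R}^r}+o(1)$. Finally I would invoke the scaling behaviour of the flat-torus determinant: writing $\Lambda_n=(\det\Lambda_n)^{1/r}A_n$ with $A_n\to A$, eigenvalues scale by $(\det\Lambda_n)^{-2/r}$, so $\zeta_{\Lambda_n\mathbb{Z}^r\backslash\mathbb{R}^r}(s)=(\det\Lambda_n)^{2s/r}\zeta_{A_n\mathbb{Z}^r\backslash\mathbb{R}^r}(s)$, and using $\zeta_T(0)=-1$ for a flat $r$-torus gives
\[
\log\detstar\Delta_{\Lambda_n\mathbb{Z}^r\backslash\mathbb{R}^r}=\frac{2}{r}\log\det\Lambda_n+\log\detstar\Delta_{A_n\mathbb{Z}^r\backslash\mathbb{R}^r};
\]
continuity of the height under $A_n\to A$ then supplies the last two terms.

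The main obstacle is the off-diagonal analysis: the Gaussian replacement of the Bessel products must be controlled uniformly in $t\in(0,\infty)$ and summably in $m$, so that the error survives the Mellin transform and the differentiation at $s=0$ while remaining $o(1)$. This requires sharp, uniform bounds on $e^{-2t}I_k(2t)$ across all regimes of $k$ relative to $t$ (small $t$, the central region $k=O(\sqrt t)$, and the large-deviation tail $k\gg t$), together with a justification of interchanging the limit $n\to\infty$ with the Mellin integration. The assumption that $\Lambda_n/(\det\Lambda_n)^{1/r}$ converges is precisely what keeps the rescaled lattice sums equicontinuous and the error estimate uniform in $n$.
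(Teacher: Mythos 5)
Your proposal takes essentially the same route as the paper: theta inversion through the Bessel generating identity, Mellin/heat-kernel regularization of the spectral zeta function, exact evaluation of the $m=0$ term as $c_r\det\Lambda_n$, and convergence of the remaining lattice sum to a continuous heat trace. The one genuine difference is organizational: you extract $\tfrac{2}{r}\log\det\Lambda_n$ from the scaling law $\zeta_{\Lambda_n\mathbb{Z}^r\backslash\mathbb{R}^r}(s)=(\det\Lambda_n)^{2s/r}\zeta_{A_n\mathbb{Z}^r\backslash\mathbb{R}^r}(s)$ together with $\zeta(0)=-1$ for a flat torus, whereas the paper substitutes $t\mapsto u_n^{2}t$ with $u_n=(\det\Lambda_n)^{1/r}$, compares directly with the unimodular torus $A$, and obtains the logarithmic term from the asymptotic $\int_0^1(e^{-u^2t}-1)\,\frac{dt}{t}=\Gamma'(1)-2\log u+o(1)$ applied to the regularizing term $e^{-t}$. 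The two accountings are equivalent, and your constants check out (including the sanity check $c_1=0$).

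The caveat is that the step you label ``the main obstacle'' is not a side issue but the actual content of the proof: the paper spends Propositions \ref{propestimates}, \ref{prophkconv}, \ref{proppointwise}, \ref{prop1infty}, \ref{prop01a} and Lemma \ref{lemuniform} on uniform $I$-Bessel bounds, the pointwise limit $\theta_{\Lambda_n}(u_n^2t)\to\theta_A(t)$, and integrable majorants on $(0,1]$ and $[1,\infty)$ separately so that dominated convergence applies. Moreover, your stated plan to control the Gaussian replacement ``uniformly in $t\in(0,\infty)$'' without rescaling cannot work as written: for fixed $t$ and $n\to\infty$ the off-diagonal Bessel terms are nowhere near their Gaussian limits (there $I_k(2t)\sim t^{|k|}/|k|!$), and the comparison is only accurate in the regime $t\asymp(\det\Lambda_n)^{2/r}$ --- which is precisely why the paper rescales time first. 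So the outline is correct and matches the paper, but the analytic core is deferred rather than supplied.
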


Our earlier paper \cite{CJK10} treats the case when the $\Lambda_{n}$
are diagonal matrices. The present paper uses several facts that are
established in that paper. Thanks to the fact that the first two terms
in the asymptotics are universal in the sense that they only depend on
$\Lambda_{n}$ via $\det\Lambda_{n}$, the theorem gives a close
connection between the complexity of certain graphs and the height of
an associated manifold. We emphasize that this is not at all obvious:
while it is true that the appropriately rescaled eigenvalues of the
discrete tori converge to the eigenvalues of
$\Delta_{A\mathbb{Z}^{r}\backslash\mathbb{R}^{r}}$, this convergence
is certainly not uniform. Moreover, the height cannot be defined as
the product of eigenvalues, there is a regularization in the
definition. An explicit expression for the height of flat tori of
volume 1,
$h(A\mathbb{Z}^{r}\backslash\mathbb{R}^{r})
:=-\log\det\Delta_{A\mathbb{Z}^{r}\backslash\mathbb{R}^{r}}$,
can be found in Theorem 2.3 of \cite{Ch97}. Deninger and L\"uck have informed
us that the constant $c_{r}$ also has an interpretation as a
determinant, namely the Fuglede-Kadison determinant of the Laplacian
on $\mathbb{Z}^{r},$ see \cite{L}. 

We turn now to a connection between our results and sphere packings.
The problem of finding the densest packing of ordinary space with
spheres of equal radii is an old one with practical importance even
in dimensions greater than 3. One type of packings is \emph{regular
sphere packings} which means that the spheres are centered at the
points of a lattice $A\mathbb{Z}^{r}$. Gauss showed that the face
centered lattice (fcc) $D_{3}$ is optimal among regular packings
in three dimensions. By work of Thue and Toth one knows that the hexagonal
lattice $A_{2}$ is densest in dimension 2. In dimenison 24 it is
known that the Leech lattice is optimal among unimodular lattices.
We refer to the book of Conway-Sloane \cite{CS99} for more information.

Conjecturally the height of $A\mathbb{Z}^{r}\backslash\mathbb{R}^{r}$
has a global minimum when $A\mathbb{Z}^{r}$ is the densest regular
sphere packing. Extremal metrics for heights has been studied in \cite{OPS88}
in dimension 2, and for tori in higher dimensions notably in \cite{Ch97}
and \cite{SS06}. In these papers, the question is phrased as the
study of the derivative of Epstein zeta functions at $s=0.$ From
this theory we can deduce the following corollaries from our main
theorem:

\begin{corollary} \label{coroptimal} Let $\{\Lambda_{n}\}$ be a sequence
of $r\times r$ integer matrices with $\det\Lambda_{n}\rightarrow\infty.$
Suppose that $\{\Lambda_{n}/(\det\,\Lambda_{n})^{1/r}\}$ belongs to a
compact subset of $SL_{r}(\mathbb{R\mathit{)}}$,
$r=2,3$, avoiding lattices equivalent to $A_{2}$, resp. $D_{3}$.
Assume that there is a sequence $\{L_{n}\}$ with $\det L_{n}=\det\Lambda_{n}$
such that $\{L_{n}/(\det\,L_{n})^{1/r}\}$ converges to $A_{2}$,
resp. $D_{3}.$
Then $L_{n}\mathbb{Z}^{r}\backslash\mathbb{Z}^{r}$
has more spanning trees than $\Lambda_{n}\mathbb{Z}^{r}\backslash\mathbb{Z}^{r}$
for all sufficiently large $n.$ \end{corollary}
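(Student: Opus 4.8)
The plan is to translate the statement about spanning trees into a statement about $\detstar\Delta$ via the matrix-tree theorem, to apply Theorem \ref{thmmain} to both sequences, to observe that the two universal ``bulk'' terms cancel because the number of vertices agrees, and thereby to reduce everything to a comparison of the heights of the limiting real tori. The only input beyond Theorem \ref{thmmain} is the fact, drawn from the analytic number theory cited above (the study of the derivative at $s=0$ of the Epstein zeta function in \cite{OPS88}, \cite{Ch97}, \cite{SS06}), that for $r=2$ (resp.\ $r=3$) the height $h(A):=h(A\mathbb{Z}^{r}\backslash\mathbb{R}^{r})=-\log\detstar\Delta_{A\mathbb{Z}^{r}\backslash\mathbb{R}^{r}}$, regarded as a continuous function of the unit-covolume lattice $A\in SL_{r}(\mathbb{R})$, attains its strict global minimum exactly at the hexagonal lattice $A_{2}$ (resp.\ the fcc lattice $D_{3}$), up to the natural equivalence. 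Equivalently, $\log\detstar\Delta_{A\mathbb{Z}^{r}\backslash\mathbb{R}^{r}}$ is strictly maximized there.

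First I would record the reduction. By Kirchhoff's matrix-tree theorem the complexity is $\tau(G)=\detstar\Delta_{G}/|V|$, and for our tori $|V|=\det\Lambda_{n}$. Since by hypothesis $\det L_{n}=\det\Lambda_{n}$, the two families have the same number of vertices, so
\[
\log\tau(L_{n}\mathbb{Z}^{r}\backslash\mathbb{Z}^{r})-\log\tau(\Lambda_{n}\mathbb{Z}^{r}\backslash\mathbb{Z}^{r})=\log\detstar\Delta_{L_{n}\mathbb{Z}^{r}\backslash\mathbb{Z}^{r}}-\log\detstar\Delta_{\Lambda_{n}\mathbb{Z}^{r}\backslash\mathbb{Z}^{r}}.
\]
It therefore suffices to show that the right-hand side is strictly positive for all large $n$.

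Next comes the asymptotic comparison, which I would organize as a subsequence argument, since $\{\Lambda_{n}/(\det\Lambda_{n})^{1/r}\}$ is only assumed to lie in a compact set $K$, not to converge. Suppose the claim fails; then there is a subsequence along which $\log\detstar\Delta_{L_{n}\mathbb{Z}^{r}\backslash\mathbb{Z}^{r}}-\log\detstar\Delta_{\Lambda_{n}\mathbb{Z}^{r}\backslash\mathbb{Z}^{r}}\le 0$. By compactness of $K$, pass to a further subsequence with $\Lambda_{n}/(\det\Lambda_{n})^{1/r}\rightarrow A\in K$; since $K$ avoids lattices equivalent to $A_{2}$ (resp.\ $D_{3}$), the limit $A$ is not equivalent to the optimizer. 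Along this subsequence $\det L_{n}=\det\Lambda_{n}\rightarrow\infty$, so Theorem \ref{thmmain} applies to both families. Crucially, the terms $c_{r}\det\Lambda_{n}$ and $\tfrac{2}{r}\log\det\Lambda_{n}$ are literally identical for the two sequences, depending only on the common value of the determinant, and hence cancel, leaving
\[
\log\detstar\Delta_{L_{n}\mathbb{Z}^{r}\backslash\mathbb{Z}^{r}}-\log\detstar\Delta_{\Lambda_{n}\mathbb{Z}^{r}\backslash\mathbb{Z}^{r}}\;\longrightarrow\;\log\detstar\Delta_{A_{2}\mathbb{Z}^{r}\backslash\mathbb{R}^{r}}-\log\detstar\Delta_{A\mathbb{Z}^{r}\backslash\mathbb{R}^{r}}=h(A)-h(A_{2}).
\]
By the minimization fact together with $A\not\sim A_{2}$ this limit is strictly positive, contradicting the assumed sign $\le 0$. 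Hence the difference is eventually positive and $L_{n}\mathbb{Z}^{r}\backslash\mathbb{Z}^{r}$ has strictly more spanning trees than $\Lambda_{n}\mathbb{Z}^{r}\backslash\mathbb{Z}^{r}$ for all large $n$.

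I expect the main obstacle to be the uniformity needed to upgrade a pointwise strict minimum of the height into an eventual inequality: because $\{\Lambda_{n}/(\det\Lambda_{n})^{1/r}\}$ need not converge, one cannot simply extract a single limit $A$ and conclude directly. The compactness of $K$ together with the continuity of $A\mapsto h(A)$ is precisely what furnishes a uniform positive gap, and packaging this through the contradiction above is the cleanest route. A secondary point to check is that Theorem \ref{thmmain} is being invoked with the \emph{common} value $\det\Lambda_{n}=\det L_{n}$, so that the two $o(1)$ error terms, though arising from independent applications, are both negligible and the subtraction is legitimate; and that the equivalence under which $A\not\sim A_{2}$ is exactly the one (rotation and lattice automorphism) preserving $\detstar\Delta_{A\mathbb{Z}^{r}\backslash\mathbb{R}^{r}}$, so that the strict inequality $h(A)>h(A_{2})$ is genuinely available from the cited results.
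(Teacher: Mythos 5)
Your proposal is correct and follows essentially the same route as the paper: convert to determinants via the matrix-tree theorem (the vertex counts $\det L_{n}=\det\Lambda_{n}$ cancel), apply Theorem \ref{thmmain} to both sequences so the two shape-independent leading terms cancel, and reduce to the strict minimality of the height at $A_{2}$ (resp.\ $D_{3}$), handled by passing to convergent subsequences using the compactness hypothesis. The paper's own proof is just a terser version of exactly this argument, citing \cite{SS06} for the minimality input.
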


\begin{corollary} \label{corestimate}Let $\Lambda_{n}$ be a sequence
of $r\times r$ integer matrices with $\det\Lambda_{n}\rightarrow\infty$
Suppose that $\{\Lambda_{n}/(\det\,\Lambda_{n})^{1/r}\}$
stays in a compact subset of $SL_{r}(\mathbb{R)}.$
For all sufficiently large $n$ we have that \[
\tau(\Lambda_{n}\mathbb{Z}^{r}\backslash\mathbb{Z}^{r})\leq\frac{\left(\det\Lambda_{n}\right)^{2/r-1}}{4\pi}\exp(c_{r}\det\Lambda_{n}+\gamma+2/r),\]
 where $\gamma$ is Euler's constant and $c_{r}$ is as in the theorem.
\end{corollary}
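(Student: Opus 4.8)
The plan is to derive Corollary \ref{corestimate} directly from Theorem \ref{thmmain} by combining the asymptotic expansion with a sharp lower bound for the height of a unimodular flat torus. The matrix-tree theorem gives $\tau(\Lambda_{n}\mathbb{Z}^{r}\backslash\mathbb{Z}^{r}) = \detstar\Delta_{\Lambda_{n}\mathbb{Z}^{r}\backslash\mathbb{Z}^{r}}/\det\Lambda_{n}$, so taking logarithms and inserting the asymptotics of Theorem \ref{thmmain} yields
\[
\log\tau(\Lambda_{n}\mathbb{Z}^{r}\backslash\mathbb{Z}^{r}) = c_{r}\det\Lambda_{n} + \left(\tfrac{2}{r}-1\right)\log\det\Lambda_{n} + \log\detstar\Delta_{A\mathbb{Z}^{r}\backslash\mathbb{R}^{r}} + o(1).
\]
Here $A$ is any subsequential limit of $\Lambda_{n}/(\det\Lambda_{n})^{1/r}$; since the rescaled matrices lie in a fixed compact subset of $SL_{r}(\mathbb{R})$, every subsequence has a convergent sub-subsequence, and this compactness is what lets me pass to the asymptotics uniformly and rule out escape to the cusp.

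The key analytic input is an explicit uniform upper bound on $\log\detstar\Delta_{A\mathbb{Z}^{r}\backslash\mathbb{R}^{r}}$, equivalently a lower bound on the height $h(A\mathbb{Z}^{r}\backslash\mathbb{R}^{r}) = -\log\detstar\Delta_{A\mathbb{Z}^{r}\backslash\mathbb{R}^{r}}$, valid for all unimodular $A$. I would use the explicit formula for the height of a flat torus of volume $1$ recalled from Theorem 2.3 of \cite{Ch97}, together with the spectral interpretation via the derivative at $s=0$ of the Epstein zeta function. The target constant $\gamma + 2/r - \log(4\pi)$ in the statement strongly suggests that the relevant bound is the rectangular/cubic torus value, or a closed-form evaluation that holds with equality in a distinguished case and as an inequality otherwise; concretely, one expects
\[
\log\detstar\Delta_{A\mathbb{Z}^{r}\backslash\mathbb{R}^{r}} \le \gamma + \tfrac{2}{r} - \log(4\pi)
\]
for every $A \in SL_{r}(\mathbb{R})$, which is precisely the inequality furnishing the exponentiated bound after combining with the displayed asymptotics.

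Granting this inequality, I would exponentiate the asymptotic identity for $\log\tau$. Since the error is $o(1)$ and the height term is bounded above by the stated constant uniformly over the compact set of rescaled matrices, for all sufficiently large $n$ one obtains
\[
\tau(\Lambda_{n}\mathbb{Z}^{r}\backslash\mathbb{Z}^{r}) \le (\det\Lambda_{n})^{2/r-1}\exp\!\bigl(c_{r}\det\Lambda_{n} + \gamma + \tfrac{2}{r} - \log(4\pi)\bigr),
\]
which is exactly the claimed estimate after writing $\exp(-\log(4\pi)) = 1/(4\pi)$. The compactness hypothesis is essential both to make the passage to subsequential limits legitimate and to absorb the $o(1)$ term uniformly, so that a single threshold $n$ works.

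The main obstacle I anticipate is establishing the height inequality with the correct explicit constant $\gamma + 2/r - \log(4\pi)$: this requires locating the maximum of $\log\detstar\Delta_{A\mathbb{Z}^{r}\backslash\mathbb{R}^{r}}$ over the unimodular group and evaluating it in closed form, which is a nontrivial variational problem about derivatives of Epstein zeta functions at $s=0$. The remaining steps — the matrix-tree identity, the subsequence argument, and the exponentiation — are routine once this spectral bound is in hand.
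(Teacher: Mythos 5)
Your proposal follows essentially the same route as the paper: matrix-tree theorem, the asymptotics of Theorem \ref{thmmain} with a compactness/subsequence argument, and the uniform height bound $\log\detstar\Delta_{A\mathbb{Z}^{r}\backslash\mathbb{R}^{r}} \le \gamma + 2/r - \log(4\pi)$ for unimodular $A$. The inequality you flag as the main obstacle is not something you need to prove — it is exactly Corollary 1 on p.\ 119 of Sarnak--Str\"ombergsson \cite{SS06}, which the paper simply cites.
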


In the trivial case $r=1,$ this estimate gives a value close to the
truth:\[
\det\Lambda_{n}=\tau(\Lambda_{n}\mathbb{Z}^{r}\backslash\mathbb{Z}^{r})\leq1.05\det\Lambda_{n}.\]

Upper bounds for the number of spanning trees have been considered
in the combinatorics literature since 1970s at least. For regular
graphs there is a rather sharp estimate by Chung and Yau \cite{CY99}
improving on an earlier result of McKay \cite{M}.
In general, it is an open
problem to decide which simple graph on $n$ vertices and $e$ edges
has the maximal complexity. This is of interest to communication network
theory since this graph invariant appears as a measure of reliability.

It would be of interest to also go in the other direction: proving
results on the extrema of families of Epstein zeta functions via a
better understanding of the number of spanning trees of discrete tori.

\section{Spectral preliminaries for discrete tori
\label{secprelim}}

Let $\Lambda$ be an invertible $r\times r$ integer matrix$.$ This
matrix defines a lattice $\Lambda\mathbb{Z}^{r}$ in $\mathbb{R}^{r}.$
We denote by $DT(\Lambda)$ the \emph{discrete torus}, or Cayley graph
of the quotient group $\Lambda\mathbb{Z}^{r}\backslash\mathbb{Z}^{r}$
with standard generating set: two elements $x$ and $y$ in
$\Lambda\mathbb{Z}^{r}\backslash\mathbb{Z}^{r}$ are \emph{adjacent},
denoted $x\sim y,$ if they differ by $\pm1$ in
exactly one of the coordinates and equal everywhere else (everything
mod $\Lambda\mathbb{Z}^{r}$ of course).

The associated (combinatorial) \emph{Laplacian} is defined by
\begin{equation}
  \label{def:combinatorial-Laplacian}
\Delta_{DT(\Lambda)}f(x)=\sum_{y\text{ s.t. }y\sim x}(f(x)-f(y))
\end{equation}
on functions
$f:\Lambda\mathbb{Z}^{r}\backslash\mathbb{Z}^{r}\rightarrow\mathbb{R}.$

The \emph{dual lattice} $\Lambda^{\ast}\mathbb{Z}^{r}$ is as usual
all the points $v$ in $\mathbb{R}^{r}$ such that $(v,x)\in\mathbb{Z}$
for all $x\in\Lambda\mathbb{Z}^{r}$, where $(\cdot,\cdot)$ denotes
the usual scalar product. Since $\mathbb{Z}^{r}$ is self-dual and
$\Lambda\mathbb{Z}^{r}$ is a subgroup of $\mathbb{Z}^{r}$ it follows
that $\mathbb{Z}^{r}$ is a subgroup of $\Lambda^{\ast}\mathbb{Z}^{r}$.
Note that the respective indices are\[
\lbrack\mathbb{Z}^{r}:\Lambda\mathbb{Z}^{r}]=[\Lambda^{\ast}:\mathbb{Z}^{r}]=\left|\det\Lambda\right|.\]

\begin{proposition} The eigenfunctions of $\Delta_{DT(\Lambda)}$
are given by\[
f_{v}(x)=e^{2\pi i(x,v)},\]
 for each $v\in\mathbb{Z}^{r}\backslash\Lambda^{\ast}\mathbb{Z}^{r},$
with corresponding eigenvalue given by\[
\lambda_{v}=2r-2\sum_{k=1}^{r}\cos(2\pi v_{k}),\]
 where $v_{k}$ denotes the $k$th coordinate of $v.$ \end{proposition}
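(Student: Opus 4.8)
The plan is to exploit the fact that $DT(\Lambda)$ is the Cayley graph of the finite abelian group $G := \Lambda\mathbb{Z}^{r}\backslash\mathbb{Z}^{r}$ with symmetric generating set $S=\{\pm e_{1},\dots,\pm e_{r}\}$, where $e_{k}$ denotes the $k$th standard basis vector. For such graphs the Laplacian is a convolution operator on $G$, so it is diagonalized by the characters of $G$; the entire proposition then amounts to identifying those characters and computing the corresponding eigenvalues explicitly.

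First I would rewrite the Laplacian in the form $\Delta_{DT(\Lambda)}f(x)=2r\,f(x)-\sum_{k=1}^{r}\bigl(f(x+e_{k})+f(x-e_{k})\bigr)$, using that every vertex has precisely the $2r$ neighbors $x\pm e_{k}$. Next I would determine when $f_{v}(x)=e^{2\pi i(x,v)}$ descends to a well-defined function on $G$: since $f_{v}(x+\ell)=f_{v}(x)\,e^{2\pi i(\ell,v)}$, invariance under $\ell\in\Lambda\mathbb{Z}^{r}$ holds exactly when $(\ell,v)\in\mathbb{Z}$ for all such $\ell$, i.e.\ precisely when $v\in\Lambda^{\ast}\mathbb{Z}^{r}$, by the definition of the dual lattice. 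Two such functions $f_{v},f_{v'}$ agree on $\mathbb{Z}^{r}$ if and only if $v-v'\in(\mathbb{Z}^{r})^{\ast}=\mathbb{Z}^{r}$, so the distinct functions are indexed by the coset representatives $v\in\mathbb{Z}^{r}\backslash\Lambda^{\ast}\mathbb{Z}^{r}$.

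The eigenvalue computation is then a one-line substitution: since $f_{v}(x\pm e_{k})=f_{v}(x)\,e^{\pm2\pi iv_{k}}$, plugging into the displayed form of the Laplacian and factoring out $f_{v}(x)$ yields $\Delta_{DT(\Lambda)}f_{v}=\bigl(2r-\sum_{k=1}^{r}(e^{2\pi iv_{k}}+e^{-2\pi iv_{k}})\bigr)f_{v}=\lambda_{v}f_{v}$ with $\lambda_{v}=2r-2\sum_{k=1}^{r}\cos(2\pi v_{k})$, exactly as claimed.

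Finally I would argue completeness, so that these are \emph{all} the eigenfunctions rather than merely a family of them. The index set $\mathbb{Z}^{r}\backslash\Lambda^{\ast}\mathbb{Z}^{r}$ has cardinality $[\Lambda^{\ast}:\mathbb{Z}^{r}]=\left|\det\Lambda\right|$, which equals both the number of vertices of $DT(\Lambda)$ and the dimension of the space of functions on $G$. Since the $f_{v}$ are characters of the finite abelian group $G$, they are pairwise orthogonal, hence linearly independent, and therefore constitute an orthogonal eigenbasis of the full function space. I do not expect a serious obstacle here; the only step requiring care is the bookkeeping of the previous paragraph — pinning down the exact index set $\mathbb{Z}^{r}\backslash\Lambda^{\ast}\mathbb{Z}^{r}$ through the dual lattice and verifying that the count matches $\left|\det\Lambda\right|$ — since this is precisely what upgrades a list of eigenfunctions into a complete diagonalization of $\Delta_{DT(\Lambda)}$.
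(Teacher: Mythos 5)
Your proof is correct and follows essentially the same route as the paper: a direct substitution of $f_{v}$ into the Laplacian to read off the eigenvalue, combined with the count $|\mathbb{Z}^{r}\backslash\Lambda^{\ast}\mathbb{Z}^{r}|=|\det\Lambda|$ to get completeness. You spell out the well-definedness of $f_{v}$ on the quotient and the orthogonality of characters, which the paper leaves implicit, but the underlying argument is identical.
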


\begin{proof} The operator $\Delta_{DT(\Lambda)}$ is a semipositive,
symmetric matrix and hence we are looking for $\left|\det\Lambda\right|$
number of eigenfunctions and eigenvalues. The proof is a trivial calculation:\begin{align*}
\Delta e^{2\pi i(x,v)} & =2re^{2\pi i(x,v)}-\sum_{y\sim x}e^{2\pi i(y,v)}=\\
 & =\left(2r-\sum_{z\sim0}e^{2\pi i(z,v)}\right)e^{2\pi i(x,v)}.\end{align*}
\end{proof}

The \emph{heat kernel} $K^{\Lambda}(t,x):\mathbb{R}_{\geq0}\times DT(\Lambda)\rightarrow\mathbb{R}$
is the unique bounded function which satisfies\begin{align*}
\left(\Delta_{DT(\Lambda)}+\frac{\partial}{\partial t}\right)K^{\Lambda}(t,x) & =0\\
K^{\Lambda}(0,x) & =\delta_{0}(x),\end{align*}
 where $\delta_{0}(x)=1$ if $x=0$ and $0$ otherwise. The existence
and uniqueness of heat kernels in a general graph setting is established
in \cite{DM06}. Recall from e.g. \cite{CJK10} that \[
K^{\mathbb{Z}^{r}}(t,z)={\displaystyle \prod\limits _{k=1}^{r}}K^{\mathbb{Z}}(t,z_{k})\]
 for $z=(z_{k})$ and $K^{\mathbb{Z}}(t,w)=e^{-2t}I_{w}(2t)$, where
$I_{w}$ is the $I$-Bessel function of order $w.$

We have the following \emph{theta inversion formula} (cf. \cite{CJK10}).

\begin{proposition} The following formula holds for $x\in\Lambda\mathbb{Z}^{r}\backslash\mathbb{Z}^{r}$
and $t\in\mathbb{R}_{\geq0}$\[
\sum_{y\in\Lambda\mathbb{Z}^{r}}K^{\mathbb{Z}^{r}}(t,x-y)=\frac{1}{\left|\det\Lambda\right|}\sum_{\nu\in\mathbb{Z}^{r}\backslash\Lambda^{\ast}\mathbb{Z}^{r}}e^{-t\lambda_{v}}f_{v}(x).\]
 In particular,\[
\theta_{\Lambda}(t):=\left|\det\Lambda\right|\sum_{y\in\Lambda\mathbb{Z}^{r}}e^{-2rt}I_{y_{1}}(2t)...I_{y_{r}}(2t)=\sum_{\nu\in\mathbb{Z}^{r}\backslash\Lambda^{\ast}\mathbb{Z}^{r}}e^{-t\lambda_{v}}\]

\end{proposition}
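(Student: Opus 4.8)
The plan is to show that both sides of the first identity coincide with the heat kernel $K^{\Lambda}(t,x)$ of the discrete torus, whose existence and uniqueness as a bounded function we are granted from \cite{DM06}; the cleanest route, however, is a direct spectral computation. Write $P(t,x)$ for the left-hand side. For fixed $t$ the factors $e^{-2t}I_{n}(2t)$ decay superexponentially in $|n|$, so the sum over $y\in\Lambda\mathbb{Z}^{r}$ converges absolutely, $P(t,\cdot)$ is a genuine function on $\Lambda\mathbb{Z}^{r}\backslash\mathbb{Z}^{r}$, and it depends only on $x$ modulo $\Lambda\mathbb{Z}^{r}$. The functions $\{f_{v}\}_{v\in\mathbb{Z}^{r}\backslash\Lambda^{\ast}\mathbb{Z}^{r}}$ from the preceding Proposition form an orthogonal basis for functions on the finite torus, since $x\mapsto e^{2\pi i(x,v)}$ is trivial exactly when $v\in\mathbb{Z}^{r}$, so that the pairing between $\Lambda\mathbb{Z}^{r}\backslash\mathbb{Z}^{r}$ and $\Lambda^{\ast}\mathbb{Z}^{r}/\mathbb{Z}^{r}$ is a perfect pairing of finite abelian groups. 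I would therefore expand $P(t,\cdot)=\sum_{v}c_{v}(t)f_{v}$ and identify the coefficients $c_{v}(t)$.

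The heart of the argument is the evaluation of
\[
c_{v}(t)=\frac{1}{\left|\det\Lambda\right|}\sum_{x}\Big(\sum_{y\in\Lambda\mathbb{Z}^{r}}K^{\mathbb{Z}^{r}}(t,x-y)\Big)\overline{f_{v}(x)},
\]
where $x$ runs over a set of coset representatives. The key device is \emph{unfolding}: as $x$ ranges over representatives and $y$ over $\Lambda\mathbb{Z}^{r}$, the difference $z=x-y$ ranges exactly once over all of $\mathbb{Z}^{r}$, and since $\overline{f_{v}(x)}=e^{-2\pi i(x,v)}=e^{-2\pi i(z,v)}$ (using $(y,v)\in\mathbb{Z}$ for $y\in\Lambda\mathbb{Z}^{r}$, $v\in\Lambda^{\ast}\mathbb{Z}^{r}$), the double sum collapses to $\tfrac{1}{|\det\Lambda|}\sum_{z\in\mathbb{Z}^{r}}K^{\mathbb{Z}^{r}}(t,z)e^{-2\pi i(z,v)}$. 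Inserting the product form $K^{\mathbb{Z}^{r}}(t,z)=\prod_{k}e^{-2t}I_{z_{k}}(2t)$ and the generating identity $\sum_{n\in\mathbb{Z}}I_{n}(2t)s^{n}=e^{t(s+s^{-1})}$ with $s=e^{-2\pi i v_{k}}$, each factor evaluates to $e^{-2t(1-\cos 2\pi v_{k})}$, and the product over $k$ is precisely $e^{-t\lambda_{v}}$. Hence $c_{v}(t)=e^{-t\lambda_{v}}/|\det\Lambda|$, which is the asserted first formula.

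The one point requiring care — and the main technical obstacle — is justifying the interchange of the two summations in the unfolding step, which rests on the absolute convergence of the $y$-sum uniformly over the finite set of representatives $x$, together with the evenness $I_{-n}(2t)=I_{n}(2t)$ needed to keep the Bessel orders consistently indexed. Granting this, the heat-kernel viewpoint also closes the argument: $P$ solves $(\partial_{t}+\Delta_{DT(\Lambda)})P=0$ because $K^{\mathbb{Z}^{r}}$ solves the heat equation on $\mathbb{Z}^{r}$ and the torus Laplacian acts locally in the same way, while $P(0,x)=\sum_{y}\delta_{0}(x-y)=\delta_{0}(x)$ on the torus, so uniqueness forces $P=K^{\Lambda}$; the same check on the spectral side re-derives the identity. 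Finally, the ``in particular'' statement is the specialization $x=0$: then $f_{v}(0)=1$, so multiplying through by $|\det\Lambda|$ turns the right-hand side into $\sum_{v}e^{-t\lambda_{v}}$, while on the left $K^{\mathbb{Z}^{r}}(t,-y)=e^{-2rt}\prod_{k}I_{-y_{k}}(2t)=e^{-2rt}\prod_{k}I_{y_{k}}(2t)$ by evenness, yielding $\theta_{\Lambda}(t)$.
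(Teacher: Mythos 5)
Your argument is correct, but it takes a genuinely different route from the paper. The paper's proof is a one-liner: both sides of the first identity are bounded solutions of the discrete heat equation on $DT(\Lambda)$ with initial condition $\delta_{0}$, so they coincide by the existence–uniqueness theorem for graph heat kernels quoted from [DM06]; the second formula is then the case $x=0$. You instead expand the periodized kernel in the finite Fourier basis $\{f_{v}\}$ and compute each coefficient directly by unfolding the double sum over $x$ and $y\in\Lambda\mathbb{Z}^{r}$ into a single sum over $\mathbb{Z}^{r}$, then evaluate the resulting lattice sum factor by factor via the generating identity $\sum_{n\in\mathbb{Z}}I_{n}(2t)s^{n}=e^{t(s+s^{-1})}$ at $s=e^{-2\pi i v_{k}}$, which correctly produces $e^{-t\lambda_{v}}$. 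Both the perfect-pairing argument for orthogonality of the $f_{v}$ and the interchange of summations (justified by the superexponential decay of $e^{-2t}I_{n}(2t)$ in $|n|$) are sound, and your closing remark shows you also see the paper's uniqueness argument as an alternative. What your computation buys is self-containedness and transparency: it exhibits the identity as a finite Poisson summation and needs no appeal to the uniqueness theorem, at the modest cost of the convergence bookkeeping; the paper's route is shorter but leans entirely on the imported heat-kernel machinery.
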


\begin{proof} Since both sides of the equation satisfy the conditions
for being the heat kernel, this follows from the uniqueness of heat
kernels. The second formula is the special case $x=0$. \end{proof}

\section{Spectral preliminaries for continuous tori}

An $r$-dimensional (continuous) torus is given as a quotient of $\mathbb{R\mathrm{^{r}}}$
by a lattice $A\mathbb{Z\mathrm{^{r}}}$ where $A\in GL_{r}(\mathbb{R}).$
The metric structure and the standard (positive) Laplace-Beltrami
operator $-\sum_{i}\partial^{2}/\partial x_{i}^{2}$ on $\mathbb{R\mathrm{^{r}}}$
projects to the torus. The volume is $\left|\det A\right|.$ Let $A^{\ast}$
be the matrix defining the dual lattice $A^{*}\mathbb{Z\mathrm{^{r}}}$,
and so $A^{*}=\left(A^{-1}\right)^{t}.$ The eigenfunctions of the
Laplacian on the torus in question are $f_{v}(x)=\exp(2\pi iv^{t}x)$
where $v$ are the vectors in the dual lattice. The corresponding
eigenvalues are $\lambda_{v}=4\pi^{2}\left\Vert v\right\Vert ^{2}$
or with a different indexing: $\lambda_{m}=(2\pi)^{2}(A^{\ast}m)^{t}(A^{*}m),$
where $m$ runs through $\mathbb{Z}^{r}.$ We have the associated
theta function\[
\Theta_{A}(t)=\sum_{m\in\mathbb{Z}^{r}}
e^{-(2\pi)^{2}(A^{\ast}m)^{t}(A^{*}m)\cdot  t}.\]
The theta inversion formula, which is equivalent to the
Poisson summation formula in this case, yields \[
\Theta_{A}(t)=\frac{1}{(4\pi t)^{r/2}}
\sum_{x\in A\mathbb{Z}^{r}}e^{-\left\vert x\right\vert ^{2}/4t}.\]

The associated spectral zeta function, which in this case
also goes under the name of the Epstein zeta function, is defined
as \[
Z_{A}(s)=\sum_{m\neq0}\lambda_{m}^{-s}=
\frac{1}{(2\pi)^{2s}}\sum_{v\neq0}\frac{1}{\left\Vert
    v\right\Vert ^{2s}}.\]

Classically, one can prove the meromorphic continuation of $Z_{A}(s)$ to all $s \in \mathbb C$, showing that
its continuation is holomorphic at $s=0$.  From this, 
one defines the spectral determinant $\detstar\Delta_{A\mathbb{Z}^{r}\backslash\mathbb{R}^{r}}$ by
$$
\log\detstar\Delta_{A\mathbb{Z}^{r}\backslash\mathbb{R}^{r}} = -Z'_{A}(0).
$$

\section{Asymptotics}

Let\[
\mathcal{I}_{r}(s)=-\int_{0}^{\infty}\left(e^{-s^{2}t}e^{-2rt}I_{0}(2t)^{r}-e^{-t}\right)\frac{dt}{t}\]
 and\[
\mathcal{H}_{\Lambda}(s)=-\int_{0}^{\infty}\left(e^{-s^{2}t}\left[\theta_{\Lambda}(t)-\left|\det\Lambda\right|\cdot e^{-2rt}I_{0}(2t)^{r}-1\right]+e^{-t}\right)\frac{dt}{t}.\]
 Everything in section 3 of \cite{CJK10} carries over with only notational
changes, even though the eigenvalues are different and the theta identity
is hence somewhat different. These differences are not essentially
used.  In particular the first order term as $t\rightarrow0$ in the
trace of the heat kernel is still (in the present notation) $\left|\det\Lambda\right|\cdot e^{-2rt}I_{0}(2t)^{r}$
since it corresponds to the trivial eigenvalue. In particular the
following extension of Theorem 3.6 in \cite{CJK10} holds:

\begin{theorem} \label{thmlogderivatives}For any $s\in\mathbb{C}$
with $\operatorname{Re}(s^{2})>0,$ we have the relation\[
\sum_{\lambda_{v}\neq0}\log(s^{2}+\lambda_{\nu})=\left|\det\Lambda\right|\cdot\mathcal{I}_{r}(s)+\mathcal{H}_{\Lambda}(s).\]
 Letting $s\rightarrow0$ we have the identity\[
\log({\displaystyle \prod\limits _{\lambda_{v}\neq0}}\lambda_{\nu})=\left|\det\Lambda\right|\cdot\mathcal{I}_{r}(0)+\mathcal{H}_{\Lambda}(0).\]

\end{theorem}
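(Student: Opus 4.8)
The plan is to begin from the standard integral representation of the logarithm,
\[
\log z=\int_{0}^{\infty}\frac{e^{-t}-e^{-zt}}{t}\,dt\qquad(\operatorname{Re}(z)>0,\ \text{principal branch}),
\]
and apply it with $z=s^{2}+\lambda_{v}$ for each nonzero eigenvalue $\lambda_{v}$. Two features make the reduction clean. First, $DT(\Lambda)$ is a finite graph, so there are only $\left|\det\Lambda\right|$ eigenvalues and the sum over $\lambda_{v}\neq0$ is finite; hence it may be exchanged with the integral with no convergence issue. Second, under the hypothesis $\operatorname{Re}(s^{2})>0$ each argument satisfies $\operatorname{Re}(s^{2}+\lambda_{v})=\operatorname{Re}(s^{2})+\lambda_{v}>0$, so the representation applies. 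Writing $N=\left|\det\Lambda\right|$ (the trivial eigenvalue being simple, so there are $N-1$ nonzero ones) and using $\sum_{\lambda_{v}\neq0}e^{-\lambda_{v}t}=\theta_{\Lambda}(t)-1$ from the theta inversion formula of the previous section, I obtain
\[
\sum_{\lambda_{v}\neq0}\log(s^{2}+\lambda_{v})=-\int_{0}^{\infty}\Bigl(e^{-s^{2}t}\bigl(\theta_{\Lambda}(t)-1\bigr)-(N-1)e^{-t}\Bigr)\frac{dt}{t}.
\]

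Next I would match this against the two defining integrals by adding and subtracting the trivial-eigenvalue contribution $N\,e^{-s^{2}t}e^{-2rt}I_{0}(2t)^{r}$ and splitting $(N-1)e^{-t}=Ne^{-t}-e^{-t}$. Collecting the terms carrying the factor $N=\left|\det\Lambda\right|$, the integrand separates exactly into the $\mathcal{H}_{\Lambda}$-integrand $e^{-s^{2}t}[\theta_{\Lambda}(t)-\left|\det\Lambda\right|e^{-2rt}I_{0}(2t)^{r}-1]+e^{-t}$ together with $\left|\det\Lambda\right|$ times the $\mathcal{I}_{r}$-integrand $e^{-s^{2}t}e^{-2rt}I_{0}(2t)^{r}-e^{-t}$, giving the asserted relation $\sum_{\lambda_{v}\neq0}\log(s^{2}+\lambda_{v})=\left|\det\Lambda\right|\,\mathcal{I}_{r}(s)+\mathcal{H}_{\Lambda}(s)$. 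This rearrangement is purely formal; the substance is to verify that splitting one convergent integral into two is legitimate, i.e.\ that $\mathcal{I}_{r}(s)$ and $\mathcal{H}_{\Lambda}(s)$ each converge.

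The convergence analysis is where the real work lies, and I expect the behavior as $t\to0$ to be the main obstacle. At $t\to\infty$ the asymptotic $e^{-2rt}I_{0}(2t)^{r}\sim(4\pi t)^{-r/2}$ together with the factor $e^{-s^{2}t}$ (and, for $\mathcal{H}_{\Lambda}$, the exponential decay of $\theta_{\Lambda}(t)-1$ to $0$) forces rapid decay, so there is nothing to check there. Near $t=0$ I would use the Bessel expansions $I_{0}(2t)=1+t^{2}+O(t^{4})$ and, for the higher orders in $\theta_{\Lambda}$, $I_{n}(2t)\sim t^{n}/n!$. For $\mathcal{I}_{r}$ one gets $e^{-s^{2}t}e^{-2rt}I_{0}(2t)^{r}-e^{-t}=O(t)$, since both terms equal $1$ at $t=0$; the subtraction of $e^{-t}$ is precisely what removes the otherwise non-integrable constant. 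For $\mathcal{H}_{\Lambda}$ one uses that the $y=0$ term of the theta inversion formula is exactly $\left|\det\Lambda\right|e^{-2rt}I_{0}(2t)^{r}$, so that $\theta_{\Lambda}(t)-\left|\det\Lambda\right|e^{-2rt}I_{0}(2t)^{r}$ is a sum over nonzero lattice vectors, each summand having a factor $I_{n}(2t)$ with $n\geq1$ and hence vanishing at least like $t$; combined with $-1+e^{-t}=O(t)$ this makes the whole integrand $O(t)$. In both cases dividing by $t$ leaves a bounded integrand near $0$. This is the estimate that ``carries over with only notational changes'' from Section~3 of \cite{CJK10}.

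Finally, for the limit $s\to0$ I would argue by continuity. The left-hand side is a finite sum of functions $\log(s^{2}+\lambda_{v})$, each continuous at $s=0$ because $\lambda_{v}>0$, so it tends to $\log\prod_{\lambda_{v}\neq0}\lambda_{v}$. On the right, the same decay estimates show that $\mathcal{I}_{r}(0)$ and $\mathcal{H}_{\Lambda}(0)$ are absolutely convergent integrals; here one uses $r\geq1$, so that $(4\pi t)^{-r/2}/t$ remains integrable at infinity even once the factor $e^{-s^{2}t}$ is removed. A dominated-convergence argument then justifies passing the limit inside both integrals, yielding the second identity.
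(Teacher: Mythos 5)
Your proof is correct and follows essentially the same route as the paper, which proves this theorem only by reference to Section 3 of \cite{CJK10}: there, too, the identity comes from the Gauss--Frullani representation $\log z=\int_{0}^{\infty}(e^{-t}-e^{-zt})\,dt/t$ applied to the finitely many nonzero eigenvalues, the theta inversion formula to replace $\sum_{\lambda_{v}\neq0}e^{-\lambda_{v}t}$ by $\theta_{\Lambda}(t)-1$, and the rearrangement into the two separately convergent integrals $\mathcal{I}_{r}$ and $\mathcal{H}_{\Lambda}$ using that the $y=0$ term of the theta series is $\left|\det\Lambda\right|e^{-2rt}I_{0}(2t)^{r}$. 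Your convergence analysis near $t=0$ and the dominated-convergence passage to $s=0$ supply exactly the details the paper delegates to \cite{CJK10}.
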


Section 4 of \cite{CJK10} is an independent section on uniform bounds
on $I$-Bessel functions. We recall the following statements, slightly
adapted to the present context (keeping in mind that $I_{-y}=I_{y}$
for integers, and that $b$ may now be real):

\begin{proposition} \label{propestimates}For any $t>0$ and $b\geq0$,
there is a constant $C$ such that\[
0\leq\sqrt{b^{2}t}e^{-b^{2}t}I_{0}(b^{2}t)\leq C<1\]
 Fix $t\geq0$ and integers $y,\, n_{0}\geq0.$ Then for all $b\geq n_{0}$
we have the uniform bound\[
0\leq\sqrt{b^{2}t}\cdot e^{-b^{2}t}I_{y}(b^{2}t)\leq\left(1+\frac{y}{bn_{0}t}\right)^{-n_{0}y/2b}.\]

\end{proposition}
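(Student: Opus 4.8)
Throughout I write $u=b^{2}t$, so both displayed quantities are $\sqrt{u}\,e^{-u}I_{y}(u)$, a function of $u\ge0$ and the integer $y\ge0$ only; nonnegativity is immediate from the power series $I_{y}(u)=\sum_{k\ge0}(u/2)^{2k+y}/\bigl(k!\,(k+y)!\bigr)$, all of whose terms are nonnegative. For the first bound the plan is to start from $I_{0}(u)=\frac{1}{\pi}\int_{0}^{\pi}e^{u\cos\theta}\,d\theta$ and substitute $\theta=2\phi$, $1-\cos\theta=2\sin^{2}\phi$, to get
\[
\sqrt{u}\,e^{-u}I_{0}(u)=\frac{2\sqrt{u}}{\pi}\int_{0}^{\pi/2}e^{-2u\sin^{2}\phi}\,d\phi .
\]
Jordan's inequality $\sin\phi\ge(2/\pi)\phi$ on $[0,\pi/2]$, followed by enlarging the range to $[0,\infty)$, bounds the right-hand side by the Gaussian integral $\frac{2\sqrt{u}}{\pi}\int_{0}^{\infty}e^{-8u\phi^{2}/\pi^{2}}\,d\phi=\sqrt{\pi/8}$, uniformly in $u$. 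Hence one may take $C=\sqrt{\pi/8}=0.626\ldots<1$.

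For the second, $y$-dependent bound I would first reduce the number of parameters. Writing $\mu=b/n_{0}\ge1$ and using $u=b^{2}t$, the asserted right-hand side becomes $\bigl(1+\tfrac{y\mu}{u}\bigr)^{-y/2\mu}$. A one-variable calculus check shows that, for fixed $u$ and $y$, this is increasing in $\mu$ (its logarithm $-\tfrac{y}{2\mu}\log(1+\tfrac{y\mu}{u})$ has positive $\mu$-derivative, since $\tfrac{x}{1+x}<\log(1+x)$ for $x>0$). Thus it suffices to treat $\mu=1$, i.e. to prove the clean inequality
\[
\sqrt{u}\,e^{-u}I_{y}(u)\le\Bigl(1+\tfrac{y}{u}\Bigr)^{-y/2}\qquad(u>0,\ y\in\mathbb{Z}_{\ge0}),
\]
from which the case $b\ge n_{0}$ follows by monotonicity. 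Factoring $\sqrt{u}\,e^{-u}I_{y}(u)=\bigl[\sqrt{u}\,e^{-u}I_{0}(u)\bigr]\cdot\tfrac{I_{y}(u)}{I_{0}(u)}$ and using the first bound in the weak form $\sqrt{u}\,e^{-u}I_{0}(u)\le1$, everything reduces to the ratio estimate $\tfrac{I_{y}(u)}{I_{0}(u)}\le\bigl(1+\tfrac{y}{u}\bigr)^{-y/2}$.

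For the ratio I would use a Chernoff-type bound from the generating function $e^{(u/2)(z+z^{-1})}=\sum_{y}I_{y}(u)z^{y}$: integrating over $|z|=R$ and estimating the integrand by its modulus gives, for every $R=e^{s}\ge1$,
\[
I_{y}(u)\le R^{-y}I_{0}\bigl(\tfrac{u}{2}(R+R^{-1})\bigr)=e^{-sy}I_{0}(u\cosh s),
\]
because $\operatorname{Re}\bigl[\tfrac{u}{2}(Re^{i\theta}+R^{-1}e^{-i\theta})\bigr]=\tfrac{u}{2}(R+R^{-1})\cos\theta$. Since $e^{-v}I_{0}(v)$ is decreasing in $v$ (its integral representation has a decreasing integrand), $I_{0}(u\cosh s)\le e^{u(\cosh s-1)}I_{0}(u)$, and optimizing $-sy+u(\cosh s-1)$ at $\sinh s=y/u$ yields
\[
\frac{I_{y}(u)}{I_{0}(u)}\le\exp\!\Bigl(-y\operatorname{arcsinh}\tfrac{y}{u}+u\bigl(\sqrt{1+y^{2}/u^{2}}-1\bigr)\Bigr).
\]

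The main obstacle is then the purely elementary---but genuinely delicate---inequality comparing this transcendental Chernoff exponent with the target: setting $x=y/u$, one must show
\[
-x\operatorname{arcsinh}x+\sqrt{1+x^{2}}-1\le-\tfrac{x}{2}\log(1+x)\qquad(x>0).
\]
Both sides vanish to high order at $x=0$ (which is the regime $u\to\infty$, where the bound is essentially sharp), so the comparison cannot be read off from leading terms; I would prove it by showing the difference $\Phi(x)$ satisfies $\Phi(0)=\Phi'(0)=0$ and $\Phi''(x)\le0$, or by a direct convexity argument, and then check the easy limits. Granting this, combining the four displays gives the $\mu=1$ inequality, and the monotonicity in $\mu$ finishes the proof for all $b\ge n_{0}$; the degenerate cases $t=0$ or $n_{0}=0$, where the stated bound reads $0\le0$ or $0\le1$, are immediate.
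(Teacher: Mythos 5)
The paper offers no proof of this proposition at all: it is recalled, with minor notational adaptation, from Section~4 of \cite{CJK10}, so there is nothing internal to compare your argument against, and any proof you give is necessarily an independent route. Your route is sound and, in my judgement, correct. The first bound via the integral representation, Jordan's inequality and the Gaussian integral cleanly gives $C=\sqrt{\pi/8}<1$. The reduction of the second bound to the case $\mu=b/n_{0}=1$ is valid (the right-hand side is indeed increasing in $\mu$, so its minimum over $b\ge n_{0}$ occurs at $b=n_{0}$), and the Chernoff-type estimate $I_{y}(u)\le e^{-sy}I_{0}(u\cosh s)$ from the generating function, combined with the monotonicity of $e^{-v}I_{0}(v)$, is a legitimate and rather elegant way to get a ratio bound; this is quite different in flavor from the direct Bessel-function manipulations one would expect behind the \cite{CJK10} statement, and it has the virtue of isolating all the work in one elementary real inequality.

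That one inequality is the only step you leave unexecuted, and it deserves to be closed since the whole proof hinges on it; fortunately your proposed strategy works exactly as stated. Set
\[
\Psi(x)=-x\operatorname{arcsinh}x+\sqrt{1+x^{2}}-1+\tfrac{x}{2}\log(1+x),
\]
so the claim is $\Psi(x)\le0$ for $x>0$. Then $\Psi(0)=0$,
\[
\Psi'(x)=-\operatorname{arcsinh}x+\tfrac12\log(1+x)+\frac{x}{2(1+x)},\qquad \Psi'(0)=0,
\]
and
\[
\Psi''(x)=\frac{2+x}{2(1+x)^{2}}-\frac{1}{\sqrt{1+x^{2}}},
\]
which is negative for $x>0$ because $4(1+x)^{4}-(2+x)^{2}(1+x^{2})=12x+19x^{2}+12x^{3}+3x^{4}>0$. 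Hence $\Psi'<0$ and then $\Psi<0$ on $(0,\infty)$, completing your argument. With this computation added, your proof is a complete, self-contained alternative to the citation of \cite{CJK10}.
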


\begin{proposition} \label{prophkconv}Let $N(u)$ be a sequence
of positive integers parametrized by $u\in\mathbb{Z}_{+}$ such that
$N(u)/u\rightarrow\alpha>0$ as $u\rightarrow\infty.$ Then for any
$t>0$ and integer $k$, we have\[
\lim_{u\rightarrow\infty}N(u)e^{-2u^{2}t}I_{N(u)k}(2u^{2}t)=\frac{\alpha}{\sqrt{4\pi t}}e^{-(\alpha k)^{2}/4t}.\]
\end{proposition}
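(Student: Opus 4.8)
The plan is to recognize the left-hand side as $N(u)$ times the discrete heat kernel $K^{\mathbb{Z}}(u^{2}t,N(u)k)=e^{-2u^{2}t}I_{N(u)k}(2u^{2}t)$ sampled at a point that scales diffusively, and to show directly that it converges to the corresponding value of the continuous heat kernel on $\mathbb{R}$. Since $N(u)$ and $k$ are integers, the order $\nu:=N(u)k$ is an integer, so I may use the classical integral representation
\[
I_{\nu}(z)=\frac{1}{\pi}\int_{0}^{\pi}e^{z\cos\theta}\cos(\nu\theta)\,d\theta,
\]
valid for integer order. Writing $z=2u^{2}t$ and inserting the factor $e^{-z}$, the quantity to be analyzed becomes
\[
N(u)e^{-z}I_{\nu}(z)=\frac{N(u)}{\pi}\int_{0}^{\pi}e^{z(\cos\theta-1)}\cos(\nu\theta)\,d\theta.
\]
Because $\cos\theta-1<0$ on $(0,\pi]$ with its maximum at $\theta=0$, this is a Laplace-type integral whose mass concentrates in a window of width $\sim z^{-1/2}$ about $\theta=0$, which is precisely the diffusive scale.

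Next I would substitute $\theta=s/\sqrt{z}$, turning the integral into
\[
N(u)e^{-z}I_{\nu}(z)=\frac{N(u)}{\pi\sqrt{z}}\int_{0}^{\pi\sqrt{z}}e^{z(\cos(s/\sqrt{z})-1)}\cos(\beta s)\,ds,\qquad \beta:=\frac{\nu}{\sqrt{z}}=\frac{N(u)k}{u\sqrt{2t}}.
\]
As $u\to\infty$ the prefactor $N(u)/(\pi\sqrt{z})=N(u)/(\pi u\sqrt{2t})$ tends to $\alpha/(\pi\sqrt{2t})$ by hypothesis, and $\beta\to\gamma:=\alpha k/\sqrt{2t}$. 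For each fixed $s$ the exponent satisfies $z(\cos(s/\sqrt{z})-1)\to -s^{2}/2$, so the integrand converges pointwise to $e^{-s^{2}/2}\cos(\gamma s)$ on $[0,\infty)$. It then remains to evaluate the Gaussian cosine integral
\[
\int_{0}^{\infty}e^{-s^{2}/2}\cos(\gamma s)\,ds=\tfrac{1}{2}\sqrt{2\pi}\,e^{-\gamma^{2}/2}=\sqrt{\tfrac{\pi}{2}}\,e^{-\gamma^{2}/2},
\]
and to multiply by the limiting prefactor, giving $\frac{\alpha}{\pi\sqrt{2t}}\cdot\sqrt{\pi/2}\,e^{-\gamma^{2}/2}=\frac{\alpha}{\sqrt{4\pi t}}e^{-(\alpha k)^{2}/4t}$, the claimed limit. (For $k<0$ one invokes $I_{-\nu}=I_{\nu}$; the case $k=0$ is the special case $\gamma=0$.)

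The step that actually requires care is the interchange of limit and integral, i.e.\ the justification of dominated convergence uniformly in $u$ (equivalently in $z$). The clean way to secure a dominating function is the elementary inequality $1-\cos\theta\ge 2\theta^{2}/\pi^{2}$ on $[0,\pi]$, immediate from $1-\cos\theta=2\sin^{2}(\theta/2)$ together with Jordan's inequality $\sin x\ge 2x/\pi$. After the substitution this yields $z(\cos(s/\sqrt{z})-1)\le -2s^{2}/\pi^{2}$ uniformly in $z$, so the integrand is bounded in absolute value by the integrable function $e^{-2s^{2}/\pi^{2}}$, and dominated convergence applies. I expect this domination to be the main (though not difficult) obstacle; alternatively the uniform Bessel estimates of Proposition \ref{propestimates} could be used to control the tail, but the inequality above keeps the argument self-contained. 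The only further point to monitor is that we are in the correct scaling regime: here $\nu^{2}/z=(N(u)k)^{2}/(2u^{2}t)\to(\alpha k)^{2}/(2t)$ stays bounded, so the argument dominates the order and we obtain the Gaussian limit rather than the Debye regime of comparable order and argument.
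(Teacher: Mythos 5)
Your argument is correct and complete. Note first that the paper itself supplies no proof of this proposition: it is recalled verbatim (up to notation) from Section~4 of \cite{CJK10}, so the only honest comparison is with that reference, where the convergence of the rescaled discrete heat kernel $N(u)e^{-2u^{2}t}I_{N(u)k}(2u^{2}t)$ to the Gaussian is likewise obtained from classical $I$-Bessel analysis. Your route --- the integer-order representation $I_{\nu}(z)=\frac{1}{\pi}\int_{0}^{\pi}e^{z\cos\theta}\cos(\nu\theta)\,d\theta$, the diffusive substitution $\theta=s/\sqrt{z}$, pointwise convergence of the exponent to $-s^{2}/2$, and domination via $1-\cos\theta\geq 2\theta^{2}/\pi^{2}$ --- is the standard Laplace-method argument and every step checks out: the representation is legitimate because $\nu=N(u)k\in\mathbb{Z}$, the dominating function $e^{-2s^{2}/\pi^{2}}$ is valid on the whole (growing) domain of integration, the Gaussian cosine integral is evaluated correctly, and the constants assemble to $\frac{\alpha}{\sqrt{4\pi t}}e^{-(\alpha k)^{2}/4t}$ as claimed. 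You also correctly flag the one point worth monitoring, namely that $\nu^{2}/z$ stays bounded so one is in the Gaussian rather than the uniform (Debye) regime. The main thing your write-up buys over the paper is self-containedness: the reader need not chase the uniform Bessel bounds of Proposition~\ref{propestimates} or the reference \cite{CJK10} to verify this particular limit.
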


From now on we fix a sequence $\{\Lambda_{n}\}$ of integer matrices with
$0<\det\Lambda_{n}\rightarrow\infty$ satisfying
 \[
\frac{1}{(\det\Lambda_{n})^{1/r}}\Lambda_{n}\rightarrow A\text{ as }n\rightarrow\infty,\]
for some $A \in \textrm{SL}_{r}(\mathbb R)$.  
 From the previous propositions we will deduce the following:

\begin{proposition} \label{proppointwise}For each fixed $t>0$ we
have the pointwise convergence\[
\theta_{\Lambda_{n}}(\det(\Lambda_{n})^{2/r}t)\rightarrow\theta_{A}(t)\]
 as $n\rightarrow\infty.$ \end{proposition}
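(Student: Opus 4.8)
The plan is to rescale, distribute the normalizing factor coordinate by coordinate, and reduce to a one-dimensional Bessel-heat-kernel limit. Write $u_n := (\det\Lambda_n)^{1/r}$, so that $u_n\to\infty$, $|\det\Lambda_n| = u_n^r$, and the argument appearing in the statement is $(\det\Lambda_n)^{2/r}t = u_n^2 t$. Parametrizing $\Lambda_n\mathbb{Z}^r$ by $y = \Lambda_n m$ with $m\in\mathbb{Z}^r$ and using $e^{-2ru_n^2 t}=\prod_{k=1}^r e^{-2u_n^2 t}$, the definition of $\theta_{\Lambda_n}$ gives
\[
\theta_{\Lambda_n}(u_n^2 t) = \sum_{m\in\mathbb{Z}^r}\ \prod_{k=1}^r\Big(u_n\,e^{-2u_n^2 t}\,I_{(\Lambda_n m)_k}(2u_n^2 t)\Big),
\]
where one power of $u_n$ has been assigned to each of the $r$ factors.

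First I would prove the termwise limit. Fix $m$. Since $\Lambda_n/u_n\to A$, the integers $(\Lambda_n m)_k$ satisfy $(\Lambda_n m)_k/u_n\to(Am)_k$ for each $k$. The content of Proposition \ref{prophkconv} is the local-limit asymptotic $e^{-2st}I_y(2st)\sim (4\pi st)^{-1/2}e^{-y^2/(4st)}$ for the heat kernel on $\mathbb{Z}$; applying it with $s=u_n^2$ and order $y=(\Lambda_n m)_k$ (using $I_{-y}=I_y$, and noting that neither the real parameter $u_n$ nor the possibility $(Am)_k=0$ causes any difficulty) yields
\[
u_n\,e^{-2u_n^2 t}\,I_{(\Lambda_n m)_k}(2u_n^2 t)\ \longrightarrow\ \frac{1}{\sqrt{4\pi t}}\,e^{-(Am)_k^2/(4t)} .
\]
Multiplying over $k$ gives the termwise limit $(4\pi t)^{-r/2}e^{-|Am|^2/(4t)}$.

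The genuine obstacle is to interchange $\lim_{n\to\infty}$ with the sum over $m\in\mathbb{Z}^r$, and this is exactly where the uniform Bessel bounds of Proposition \ref{propestimates} are needed. Taking $b=\sqrt{2}\,u_n$ and a fixed $n_0=1$, the second estimate there gives, for every order $y$ and all large $n$, the uniform pointwise bound $u_n\,e^{-2u_n^2 t}I_y(2u_n^2 t)\le(2t)^{-1/2}$, so each factor is bounded independently of $m$ and $n$. To kill large $|m|$, I would use that $A$ is invertible together with $\Lambda_n/u_n\to A$ to produce a constant $c>0$ with $|\Lambda_n m|\ge c\,u_n|m|$ for all large $n$; then some coordinate $k^\ast$ has $|(\Lambda_n m)_{k^\ast}|\ge c\,u_n|m|/\sqrt{r}$, and for that factor the same estimate of Proposition \ref{propestimates} gives decay of the form $(1+c'|m|/t)^{-c''|m|}$, faster than any power of $|m|$. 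Bounding the remaining $r-1$ factors by $(2t)^{-1/2}$, the general term is dominated uniformly in $n$ by a summable function of $m$, and dominated convergence applies.

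It remains to identify the limit. Combining the two previous steps,
\[
\lim_{n\to\infty}\theta_{\Lambda_n}\big((\det\Lambda_n)^{2/r}t\big) = \sum_{m\in\mathbb{Z}^r}\frac{1}{(4\pi t)^{r/2}}e^{-|Am|^2/(4t)} = \frac{1}{(4\pi t)^{r/2}}\sum_{x\in A\mathbb{Z}^r}e^{-|x|^2/(4t)},
\]
since $x=Am$ ranges over $A\mathbb{Z}^r$ as $m$ ranges over $\mathbb{Z}^r$. By the theta inversion (Poisson summation) formula for the continuous torus recorded in Section 3, the right-hand side is precisely $\Theta_A(t)=\theta_A(t)$, which is the claim.
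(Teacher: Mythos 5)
Your proof is correct and follows essentially the same route as the paper: rescale by $u_n=(\det\Lambda_n)^{1/r}$, get the termwise limit from Proposition \ref{prophkconv}, dominate via the uniform Bessel bounds of Proposition \ref{propestimates} to justify interchanging limit and sum, and identify the limit with $\theta_A(t)$ through the Poisson summation form of the continuous theta function. The only (harmless, arguably cleaner) difference is in how you build the dominating function --- you use the smallest singular value of $A$ to get $|\Lambda_n m|\geq c\,u_n|m|$ and extract superpolynomial decay from the largest coordinate, whereas the paper splits the sum according to how many coordinates of $\Lambda_n v$ vanish and compares against geometric series in the nonzero entries of $Av$.
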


\begin{proof} 
For any $v\in\mathbb{Z}^{r}$ and $\Lambda \in \text{\rm GL}_{r}(\mathbb R)$, let
$$
{\mathbf I}_{v,\Lambda}(t) = \prod\limits_{i=1}^{r}I_{(\Lambda v)_{i}}(t)
$$ 
where $(\Lambda v)_{i}$ denotes the $i$-th component of $\Lambda v$.  
Let $u_{n}=\det(\Lambda_{n})^{1/r}$ and
$a_{i}=\left(Av\right)_{i}.$
Note that $(\Lambda v)_{i}/u_{n}\rightarrow a_{i}.$ We have\[
\theta_{\Lambda_{n}}(u_{n}^{2}t)=\sum_{v\in\mathbb{Z}^{r}}u_{n}^{r}e^{-2ru_{n}^{2}t}
{\mathbf I}_{v,\Lambda_{n}}(2u_{n}^{2}t).
\]
From Proposition \ref{prophkconv} (with $k=0$ or $\pm1$) we
have for any $t>0$ and $v\in\mathbb{Z}^{r}\ $ that\[
u^{r}_{n} e^{-2ru_{n}^{2}t}{\mathbf I}_{v,\Lambda_{n}}(2u_{n}^{2}t)\rightarrow\frac{1}{\left(\sqrt{4\pi t}\right)^{r}}e^{-a_{1}^{2}/4t}...e^{-a_{r}^{2}/4t}\]
 as $n\rightarrow\infty.$ This means that the proposition will be
proved if we can interchange the limit and the infinite sum. We show
that for fixed $t$, the sum is convergent uniformly in $u_{n}$ (or
equivalently, in $n).$

We can rewrite the sum $\theta_{\Lambda_{n}}(u_{n}^{2}t)$ in $r+1$
sums depending on how many components of the $\Lambda_{n} v$ are zero. Pick
$n_{0}$ sufficiently large so that 
$$
\left|a_{i}\right|/2\leq\left|(\Lambda_{n}v)_{i}\right|/u_{n}\leq2\left|a_{i}\right|
$$
for all $v\in\mathbb{Z}^{r}$ and $n\geq n_{0}$ and $a_{i}\neq0$.
Recall that $I_{-n}=I_{n}.$ Let us look at a term with $k$ zeros
in the $y_{i}$s and estimate with the help of Proposition \ref{propestimates}:
\begin{align*}
 u_{n}^{r}e^{-2ru_{n}^{2}t}{\mathbf I}_{v,\Lambda_{n}}(2u_{n}^{2}t)
  &\leq\left(\frac{1}{\sqrt{2t}}\right)^{r}{\displaystyle \prod\limits _{(\Lambda_{n}v)_{i}\neq0}}\left(1+
  \frac{\left\vert (\Lambda_{n}v)_{i}\right\vert }{u_{n}n_{0}2t}\right)^{-n_{0}\left\vert (\Lambda_{n}v)_{i}\right\vert /2u_{n}}
\\& \leq\left(\frac{1}{\sqrt{2t}}\right)^{r}{\displaystyle \prod\limits _{a_{i}\neq0}}\lambda^{\left|a_{i}\right|}
\end{align*}
 for all $n$ large and where \[
\lambda:=\left(1+\frac{a}{n_{0}4t}\right)^{-n_{0}/4}<1\]
 and $a$ is the smallest nonzero absolute value of all the entries
in $A\mathbb{Z}^{r}$. The whole theta series is therefore bounded
by $r+1$ sums of a product of convergent geometric series. This shows
that the infinite sum is uniformly convergent and the proof is complete.
\end{proof}

\begin{lemma} \label{lemuniform} Given a sequence
$\{\Lambda_{n}\}$ satisfying $(\det\Lambda_{n})^{-1/r}\Lambda_{n}\rightarrow A$
as above, there is a constant $d>0$ such that for all sufficiently
large $n$\[
\theta_{\Lambda_{n}}(u_{n}^{2}t)\leq1+\sum_{j=1}^{\infty}e^{-dtj}.\]

\end{lemma}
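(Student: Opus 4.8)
The plan is to pass to the spectral side of the theta inversion formula of Section \ref{secprelim} (the identity $\theta_{\Lambda}(t)=\sum_{\nu}e^{-t\lambda_{v}}$), writing
$$
\theta_{\Lambda_n}(u_n^2 t)=\sum_{v}e^{-u_n^2 t\,\lambda_v}=1+\sum_{\lambda_v\neq0}e^{-u_n^2 t\,\lambda_v},
$$
so that the isolated summand $1$ is exactly the contribution of the trivial eigenvalue $\lambda_0=0$. The lemma then becomes the uniform bound $f_n(t):=\sum_{\lambda_v\neq0}e^{-u_n^2 t\lambda_v}\le\sum_{j\ge1}e^{-dtj}$ on the nonzero part of the spectrum. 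The first key input is a uniform spectral gap: using $\lambda_v=4\sum_{k}\sin^2(\pi v_k)\ge c\,\mathrm{dist}(v,\mathbb{Z}^r)^2$ and the fact that the hypothesis $(\det\Lambda_n)^{-1/r}\Lambda_n\to A$ forces the rescaled dual lattices $u_n\Lambda_n^{\ast}\mathbb{Z}^r$ to converge to the fixed lattice $A^{\ast}\mathbb{Z}^r$, I would show that the smallest nonzero scaled eigenvalue satisfies $u_n^2\lambda_{\min}\ge\delta>0$ for all large $n$, with $\delta$ depending only on $A$ through the shortest vector of $A^{\ast}\mathbb{Z}^r$; the point is that the competing lattice $u_n\mathbb{Z}^r$ has covolume $\det\Lambda_n\to\infty$ and so becomes arbitrarily sparse, leaving the origin as the only nearby point.

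With the gap in hand the geometric decay follows directly. Fix a constant $T>0$. For $t\ge T$ and every nonzero eigenvalue, factor
$$
e^{-u_n^2 t\lambda_v}=e^{-(t-T)u_n^2\lambda_v}\,e^{-Tu_n^2\lambda_v}\le e^{-\delta(t-T)}\,e^{-Tu_n^2\lambda_v},
$$
whence $f_n(t)\le e^{-\delta(t-T)}f_n(T)$. By Proposition \ref{proppointwise}, $f_n(T)=\theta_{\Lambda_n}(u_n^2 T)-1$ converges as $n\to\infty$, so $f_n(T)\le M$ for a constant $M=M(A,T)$ and all large $n$, giving $f_n(t)\le Me^{\delta T}e^{-\delta t}$ on $t\ge T$. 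It then remains to choose $d>0$ so that this exponential majorant lies below $\sum_{j\ge1}e^{-dtj}=(e^{dt}-1)^{-1}$ for all $t\ge T$; any $d<\delta$ handles the tail $t\to\infty$, where $\sum_{j}e^{-dtj}\sim e^{-dt}$ decays strictly more slowly than $e^{-\delta t}$, while on the remaining compact range one compares directly, using the uniform (on compacts) convergence $f_n\to\Theta_A-1$. Shrinking $d$ and enlarging $T$ as needed closes the estimate.

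The main obstacle is the uniformity in $n$: a single constant $d$ and a single threshold must serve the entire tail of $\{\Lambda_n\}$, and this is precisely what the convergence $u_n\Lambda_n^{\ast}\mathbb{Z}^r\to A^{\ast}\mathbb{Z}^r$ provides, by pinning down the short vectors of the dual lattices (hence the uniform gap $\delta$) and, via Proposition \ref{proppointwise}, the value $f_n(T)$. I would emphasize that this estimate is the domination meant for the range of $t$ bounded away from $0$ used in the defining integral; the complementary small-$t$ regime, where $\theta_{\Lambda_n}(u_n^2 t)$ grows like $(4\pi t)^{-r/2}$, is not covered by this bound and is handled separately through the subtraction of the leading heat-trace term together with the uniform $I$-Bessel estimates of Proposition \ref{propestimates}, exactly as in the proof of Proposition \ref{proppointwise}. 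The only delicate point beyond the gap is the one-variable comparison on the bounded interval adjacent to $T$, where the crude exponential tail bound is replaced by the convergence $f_n\to\Theta_A-1$.
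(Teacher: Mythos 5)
Your proposal is correct, and it reaches the bound by a route that differs from the paper's in its second half. The paper also passes to the spectral side, writes $\theta_{\Lambda_n}(u_n^2t)=1+\sum_{v\neq0}\prod_k e^{-4tu_n^2\sin^2(\pi v_k)}$, and lower-bounds $u_n\sin(\pi v_k)$ by elementary sine inequalities; but it then finishes by observing that the rescaled dual coordinates $u_nv_k$ approach integer multiples of entries of $A$, so the whole sum is dominated term-by-term by (products of) convergent geometric series over the limiting lattice --- i.e.\ the lattice-point counting is done directly and the proof never invokes Proposition \ref{proppointwise}. You instead extract only a uniform spectral gap $u_n^2\lambda_{\min}\geq\delta$ from the convergence $u_n\Lambda_n^{\ast}\mathbb{Z}^r\to A^{\ast}\mathbb{Z}^r$, factor $e^{-u_n^2t\lambda_v}\leq e^{-\delta(t-T)}e^{-u_n^2T\lambda_v}$, and import the boundedness of $f_n(T)$ from the already-established pointwise convergence of Proposition \ref{proppointwise}; this is a legitimate (non-circular) use of that proposition and makes the counting step a one-line monotonicity argument, at the price of coupling the spectral side to the Bessel-function estimates underlying Proposition \ref{proppointwise}, whereas the paper's version is self-contained on the spectral side. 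Two small points: you do not actually need uniform-on-compacts convergence of $f_n$ --- pointwise boundedness at the single time $T$ plus the monotonicity you already exploit suffices --- and your closing claim that a suitable $d$ exists is correct but deserves the explicit check that $Me^{\delta T}e^{-\delta t}\leq(e^{dt}-1)^{-1}$ on all of $[T,\infty)$ once $d$ is taken small enough (small $d$ handles the initial compact range since $(e^{dT}-1)^{-1}\to\infty$, and $d<\delta$ handles the tail). Your remark that the stated inequality can only be meant for $t$ bounded away from $0$ (it is applied on $[1,\infty)$ in Proposition \ref{prop1infty}, and fails near $0$ for $r\geq3$) is accurate and is a point the paper leaves implicit.
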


\begin{proof} Let $u_{n}=\det(\Lambda_{n})^{1/r}.$ We have\begin{align*}
\theta_{\Lambda_{n}}(t) & =\sum_{\nu\in\mathbb{Z}^{r}\backslash\Lambda_{n}^{\ast}\mathbb{Z}^{r}}e^{-t\lambda_{v}}=\sum_{\nu\in\mathbb{Z}^{r}\backslash\Lambda_{n}^{\ast}\mathbb{Z}^{r}}e^{-t\left(2r-2\sum_{k=1}^{r}\cos(2\pi v_{k})\right)}\\
 & =1+\sum_{\substack{\nu\in\mathbb{Z}^{r}\backslash\Lambda_{n}^{\ast}\mathbb{Z}^{r}\\
v\neq0}
}{\displaystyle \prod\limits _{k=1}^{r}}e^{-4t\sin^{2}(\pi v_{k})}\end{align*}
 So that \[
\theta_{\Lambda_{n}}(u_{n}^{2}t)=1+\sum_{\substack{\nu\in\mathbb{Z}^{r}\backslash\Lambda_{n}^{\ast}\mathbb{Z}^{r}\\
v\neq0}
}{\displaystyle \prod\limits _{k=1}^{r}}e^{-4tu_{n}^{2}\sin^{2}(\pi v_{k})}\]
 We use the elementary bounds $\sin x\geq x-x^{3}/6$ and $\sin(\pi-x)\geq x-x^{3}/6$
for $x\in\lbrack0,\pi/2]$ and get\begin{align*}
u_{n}^{{}}\sin(\pi v_{k}) & \geq u_{n}\pi v_{k}(1-\pi^{2}v_{k}^{2}/6)>c\pi u_{n}v_{k}\text{ if }v_{k}\leq1/2\\
u_{n}^{{}}\sin(\pi v_{k}) & \geq u_{n}\pi v_{k}(1-\pi^{2}v_{k}^{2}/6)>c\pi u_{n}(1-v_{k})\text{ if }v_{k}>1/2\end{align*}
 for some postitve constant $c$ for all $n$ sufficiently large.
Note also that for every $v\neq0$, the values $u_{n}v_{k}$ range over the
integers times an entry in $A$ as $n\rightarrow\infty$
because of the convergence of $(\det\Lambda_{n})^{-1/r}\Lambda_{n}$ to $A$ in
$\textrm{SL}_{r}(\mathbb{R}).$  We then conclude there is a constant $d>0$ such
that for all sufficiently large $n$\[
\theta_{\Lambda_{n}}(u_{n}^{2}t)\leq1+\sum_{j=1}^{\infty}e^{-dtj}.\]

\end{proof}

Now we can show:

\begin{proposition} \label{prop1infty}With the notation as above
and $u_{n}:=\det(\Lambda_{n})^{1/r}$, we have that\begin{align*}
 & \int_{1}^{\infty}\left(\theta_{\Lambda_{n}}(u_{n}^{2}t)-u_{n}^{r}e^{-2ru_{n}^{2}t}I_{0}(2u_{n}^{2}t)^{r}-1+e^{-u_{n}^{2}t}\right)\frac{dt}{t}\\
 & =\int_{1}^{\infty}(\theta_{A}(t)-1)\frac{dt}{t}-\frac{2}{r}(4\pi)^{-r/2}+o(1)\end{align*}
 as $n \rightarrow \infty$.
\end{proposition}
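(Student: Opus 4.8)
The plan is to establish this by dominated convergence on $[1,\infty)$ with respect to the measure $dt/t$. I would abbreviate the integrand on the left-hand side by
\[
F_{n}(t)=\theta_{\Lambda_{n}}(u_{n}^{2}t)-u_{n}^{r}e^{-2ru_{n}^{2}t}I_{0}(2u_{n}^{2}t)^{r}-1+e^{-u_{n}^{2}t},
\]
and first identify its pointwise limit for each fixed $t>0$. By Proposition \ref{proppointwise} the first term tends to $\theta_{A}(t)$. For the second term, Proposition \ref{prophkconv} applied with $N(u)=u$, so $\alpha=1$, and $k=0$ gives $u_{n}e^{-2u_{n}^{2}t}I_{0}(2u_{n}^{2}t)\to(4\pi t)^{-1/2}$, whence its $r$-th power tends to $(4\pi t)^{-r/2}$. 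Since $u_{n}^{2}\to\infty$ the term $e^{-u_{n}^{2}t}\to 0$. Thus $F_{n}(t)\to\theta_{A}(t)-1-(4\pi t)^{-r/2}$ pointwise.

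The principal work, and the step I expect to be the main obstacle, is to produce an integrable dominating function independent of $n$ so that the limit may be taken inside the integral. For the theta contribution I would invoke Lemma \ref{lemuniform}: since every summand of $\theta_{\Lambda_{n}}$ is nonnegative, for all large $n$ one has $0\le\theta_{\Lambda_{n}}(u_{n}^{2}t)-1\le\sum_{j\ge1}e^{-dtj}=e^{-dt}/(1-e^{-dt})$, which for $t\ge1$ is bounded by $e^{-dt}/(1-e^{-d})$ and is integrable against $dt/t$ on $[1,\infty)$. For the Bessel contribution I would use the first bound of Proposition \ref{propestimates} with $b^{2}=2u_{n}^{2}$, giving $\sqrt{2u_{n}^{2}t}\,e^{-2u_{n}^{2}t}I_{0}(2u_{n}^{2}t)\le C<1$, so that $u_{n}e^{-2u_{n}^{2}t}I_{0}(2u_{n}^{2}t)\le C/\sqrt{2t}$ and therefore $0\le u_{n}^{r}e^{-2ru_{n}^{2}t}I_{0}(2u_{n}^{2}t)^{r}\le C^{r}(2t)^{-r/2}$ uniformly in $n$; this is integrable against $dt/t$ on $[1,\infty)$ since $\int_{1}^{\infty}t^{-r/2-1}\,dt=2/r$. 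Finally $0\le e^{-u_{n}^{2}t}\le e^{-t}$ for $u_{n}\ge1$. Combining these three bounds by the triangle inequality yields a single $n$-independent dominating function $G(t)$ with $\int_{1}^{\infty}G(t)\,dt/t<\infty$.

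With the dominating function in hand, dominated convergence produces
\[
\lim_{n\to\infty}\int_{1}^{\infty}F_{n}(t)\frac{dt}{t}=\int_{1}^{\infty}\Bigl(\theta_{A}(t)-1-(4\pi t)^{-r/2}\Bigr)\frac{dt}{t}.
\]
It then remains only to split off and evaluate the elementary integral
\[
\int_{1}^{\infty}(4\pi t)^{-r/2}\frac{dt}{t}=(4\pi)^{-r/2}\int_{1}^{\infty}t^{-r/2-1}\,dt=\frac{2}{r}(4\pi)^{-r/2},
\]
the last step being valid because $r\ge1$. Rearranging exhibits the stated main term $\int_{1}^{\infty}(\theta_{A}(t)-1)\,dt/t-\tfrac{2}{r}(4\pi)^{-r/2}$, and since the left-hand integral converges to it, the remaining difference is $o(1)$, which completes the argument.
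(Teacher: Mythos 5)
Your argument is correct and follows essentially the same route as the paper: the same pointwise limits (Propositions \ref{proppointwise} and \ref{prophkconv}), the same dominating bounds (Lemma \ref{lemuniform} for the theta term, the first estimate of Proposition \ref{propestimates} for the Bessel term, and $e^{-u_n^2t}\le e^{-t}$), and dominated convergence, with the only cosmetic difference being that the paper splits the integral into three pieces first while you assemble one dominating function and split off the $(4\pi t)^{-r/2}$ integral at the end. Your write-up is, if anything, slightly more explicit about the $n$-independent integrable majorant for the Bessel factor.
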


\begin{proof} Write\begin{align*}
 & \int_{1}^{\infty}\left(\theta_{\Lambda_{n}}(u_{n}^{2}t)-u_{n}^{r}e^{-2ru_{n}^{2}t}I_{0}(2u_{n}^{2}t)^{r}-1+e^{-u_{n}^{2}t}\right)\frac{dt}{t}\\
 & =\int_{1}^{\infty}\left(\theta_{\Lambda_{n}}(u_{n}^{2}t)-1\right)\frac{dt}{t}-\int_{1}^{\infty}u_{n}^{r}e^{-2ru_{n}^{2}t}I_{0}(2u_{n}^{2}t)^{r}\frac{dt}{t}+\int_{1}^{\infty}e^{-u_{n}^{2}t}\frac{dt}{t}.\end{align*}
 In the last row, the third integral clearly goes to zero as $n\rightarrow\infty.$
For the first integral in the same row we have\[
\int_{1}^{\infty}\left(\theta_{\Lambda_{n}}(u_{n}^{2}t)-1\right)\frac{dt}{t}\rightarrow\int_{1}^{\infty}(\theta_{A}(t)-1)\frac{dt}{t}\]
 in view of the pointwise convergence from Proposition \ref{proppointwise}
and the uniform integrable upper bound from Lemma \ref{lemuniform}.

The middle integral \[
\int_{1}^{\infty}u_{n}^{r}e^{-2ru_{n}^{2}t}I_{0}(2u_{n}^{2}t)^{r}\frac{dt}{t}\]
 converges to \[
\int_{1}^{\infty}(4\pi t)^{-r/2}\frac{dt}{t}=\frac{2}{r}(4\pi)^{-r/2}\]
 in view of the heat kernel convergence from Proposition \ref{propestimates}
and Proposition \ref{prophkconv}, so then we may appeal to the
Lebesgue dominated convergence theorem. \end{proof}

Next we show:

\begin{proposition} \label{prop01a}With the notation as above and
$u_{n}:=\det(\Lambda_{n})^{1/r}$, we have that\[
\int_{0}^{1}\left(\theta_{\Lambda_{n}}(u_{n}^{2}t)-u_{n}^{r}e^{-2ru_{n}^{2}t}I_{0}(2u_{n}^{2}t)^{r}\right)\frac{dt}{t}\rightarrow\int_{0}^{1}\left(\theta_{A}(t)-(4\pi t)^{-r/2}\right)\text{ }\frac{dt}{t}\]
 as $n\rightarrow\infty.$ \end{proposition}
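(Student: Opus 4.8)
The plan is to observe that the subtracted term $u_n^r e^{-2ru_n^2t}I_0(2u_n^2t)^r$ is exactly the $v=0$ summand of $\theta_{\Lambda_n}(u_n^2t)=\sum_{v\in\mathbb Z^r}u_n^r e^{-2ru_n^2t}\mathbf I_{v,\Lambda_n}(2u_n^2t)$, so that the integrand is the nonnegative tail
\[
D_n(t):=\sum_{v\neq 0}u_n^r e^{-2ru_n^2t}\mathbf I_{v,\Lambda_n}(2u_n^2t),
\]
and to establish the statement by dominated convergence on $(0,1)$ against $dt/t$. The pointwise limit is already in hand: for each fixed $t>0$, Proposition \ref{proppointwise} gives $\theta_{\Lambda_n}(u_n^2t)\to\theta_A(t)$ while Proposition \ref{prophkconv} (applied componentwise with $k=0$) gives $u_n^r e^{-2ru_n^2t}I_0(2u_n^2t)^r\to(4\pi t)^{-r/2}$, so that $D_n(t)\to\theta_A(t)-(4\pi t)^{-r/2}$, which is the $x\neq 0$ part of the series $(4\pi t)^{-r/2}\sum_{x\in A\mathbb Z^r}e^{-|x|^2/4t}$ for $\theta_A$. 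Thus the only real content is to produce a single majorant that is integrable against $dt/t$ near $t=0$.

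The hard part is precisely this domination near $t=0$: the prefactor coming from Proposition \ref{propestimates} grows like $(2t)^{-r/2}$, and $\int_0^1 t^{-r/2}\,dt/t$ diverges, so the crude termwise bound used in Proposition \ref{proppointwise} is not integrable. The gain must be extracted from the exponential suppression of the nonzero components. Writing $\mu_i:=|(\Lambda_n v)_i|/u_n$ and applying Proposition \ref{propestimates} with $b=u_n$ and argument $2u_n^2t$ (the zero components each contributing a factor $\le(2t)^{-1/2}$), I would obtain, for $n$ large and $v\neq 0$,
\[
u_n^r e^{-2ru_n^2t}\mathbf I_{v,\Lambda_n}(2u_n^2t)\le(2t)^{-r/2}\prod_{(\Lambda_n v)_i\neq 0}\left(1+\frac{\mu_i}{2n_0t}\right)^{-n_0\mu_i/2}.
\]

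The key device is to split each exponent in half and use the two halves for different purposes. Since $\Lambda_n/u_n\to A\in SL_r(\mathbb R)$, for large $n$ one has $\|\Lambda_n v\|\ge c\,u_n\|v\|$ with a fixed $c>0$, so that $\max_i\mu_i\ge\delta\|v\|$ with $\delta:=c/\sqrt r>0$; in particular every $v\neq 0$ has a component $\mu_{i_0}\ge\delta$. In the first half, retaining only the $i_0$-factor and using $1+x\ge x$, I get the $v$- and $n$-uniform bound $\prod(1+\frac{\mu_i}{2n_0t})^{-n_0\mu_i/4}\le(1+\frac{\delta}{2n_0t})^{-n_0\delta/4}\le C_1t^{n_0\delta/4}$; choosing the parameter $n_0$ (which may be taken as large as we like) so that $n_0\delta/4>r/2$ turns $(2t)^{-r/2}\cdot t^{n_0\delta/4}$ into $C\,t^{\varepsilon}$ with $\varepsilon>0$. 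In the second half, each factor is increasing in $t$, hence bounded by its value at $t=1$, giving $P_v:=\prod_{(\Lambda_n v)_i\neq 0}(1+\frac{\mu_i}{2n_0})^{-n_0\mu_i/4}$; because $\max_i\mu_i\ge\delta\|v\|$, these decay super-exponentially in $\|v\|$, so $\sum_{v\neq 0}P_v=:S<\infty$ uniformly in large $n$. Summing the termwise estimates yields $0\le D_n(t)\le C_1 S\,t^{\varepsilon}=:\Phi(t)$ on $(0,1)$, with $\int_0^1\Phi(t)\,dt/t=C_1S/\varepsilon<\infty$.

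With the majorant $\Phi$ in place, dominated convergence for $D_n$ against $dt/t$ on $(0,1)$ gives the asserted limit, the limiting integrand being integrable for the same reason (it too is bounded by $\Phi$). The essential obstacle, as indicated, is the competition near $t=0$ between the $t^{-r/2}$ singularity and the geometric decay; it is resolved by the observation that a single nonzero component already furnishes an arbitrarily high power of $t$ once $n_0$ is enlarged, freeing the remainder of the product to enforce summability in $v$.
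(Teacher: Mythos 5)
Your proof is correct and follows essentially the same route as the paper: pointwise convergence from Propositions \ref{proppointwise} and \ref{prophkconv}, together with an integrable majorant near $t=0$ extracted from the Bessel bounds of Proposition \ref{propestimates} by taking the parameter $n_{0}$ large enough that the power of $t$ gained from a nonzero component exceeds $r/2$. The paper's proof simply refers back to the bound established in the proof of Proposition \ref{proppointwise} and adds the condition $n_{0}a/4>r/2$; your exponent-splitting bookkeeping writes out the same domination argument in explicit detail.
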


\begin{proof} For fixed $t$ we have the pointwise convergence as
$n\rightarrow\infty$\[
\theta_{\Lambda_{n}}(u_{n}^{2}t)-u_{n}^{r}e^{-2ru_{n}^{2}t}I_{0}(2u_{n}^{2}t)^{r}\rightarrow\theta_{A}(t)-(4\pi t)^{-r/2}.\]
 It remains therefore to exhibit uniform (for $n>>1$) integrable
bounds on the integrands. This can be done in the same way as in the
proof of Proposition \ref{proppointwise}. In order to make the bound
obtained there, in terms of $\lambda,$ integrable for $0\leq t\leq1,$
we just need to choose $n_{0}$ large so that $n_{0}a/4>r/2.$ \end{proof}

Finally we recall from \cite{CJK10}:

\begin{proposition} \label{prop01b}For $u\in\mathbb{R}$ we have
the asymptotic formula\[
\int_{0}^{1}(e^{-u^{2}t}-1)\frac{dt}{t}=\Gamma^{\prime}(1)-2\log(u)+o(1)\]
as $u\rightarrow\infty$.

\end{proposition}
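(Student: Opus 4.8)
The plan is to reduce the claim to the standard asymptotics of an incomplete exponential integral. First I would substitute $s=u^{2}t$, so that $dt/t=ds/s$ while the range $t\in[0,1]$ becomes $s\in[0,u^{2}]$. This rewrites the left-hand side as $\int_{0}^{u^{2}}(e^{-s}-1)\,ds/s$, a quantity that depends on $u$ only through its upper limit. The problem thus becomes understanding $\int_{0}^{X}(e^{-s}-1)\,ds/s$ as $X=u^{2}\to\infty$, and showing it equals $\Gamma'(1)-\log X+o(1)$; since $\log X=\log(u^{2})=2\log u$, this is exactly the asserted formula.

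Second, I would isolate the logarithmic divergence by splitting at $s=1$, writing $\int_{0}^{X}=\int_{0}^{1}+\int_{1}^{X}$. The integrand $(e^{-s}-1)/s$ is bounded near $s=0$ (it tends to $-1$), so $C_{1}:=\int_{0}^{1}(e^{-s}-1)\,ds/s$ is a finite constant independent of $X$. On the tail I would separate $(e^{-s}-1)/s=e^{-s}/s-1/s$, which gives $\int_{1}^{X}(e^{-s}-1)\,ds/s=\int_{1}^{X}e^{-s}\,ds/s-\log X$; the first term converges as $X\to\infty$ to the finite constant $C_{2}:=\int_{1}^{\infty}e^{-s}\,ds/s$, because $e^{-s}/s$ is integrable at infinity. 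Collecting terms yields $\int_{0}^{X}(e^{-s}-1)\,ds/s=C_{1}+C_{2}-\log X+o(1)$, so everything comes down to identifying the constant $C_{1}+C_{2}$.

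The remaining and main step is to show $C_{1}+C_{2}=\Gamma'(1)$ (equivalently $=-\gamma$). Starting from $\Gamma'(1)=\int_{0}^{\infty}e^{-s}\log s\,ds$, I would integrate by parts on a truncated interval $[\epsilon,R]$, taking $\log s$ and $e^{-s}\,ds$ as the two factors so that $v=-e^{-s}$. This produces the boundary contribution $e^{-\epsilon}\log\epsilon-e^{-R}\log R$ together with $\int_{\epsilon}^{R}e^{-s}\,ds/s$. Splitting the latter at $1$ and rewriting $\int_{\epsilon}^{1}e^{-s}\,ds/s=\int_{\epsilon}^{1}(e^{-s}-1)\,ds/s-\log\epsilon$ pairs the $-\log\epsilon$ against the boundary term to form $(e^{-\epsilon}-1)\log\epsilon$. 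The delicate part is precisely the bookkeeping of these boundary terms in the double limit: one checks that $(e^{-\epsilon}-1)\log\epsilon\to0$ as $\epsilon\to0$ (using $(e^{-\epsilon}-1)\sim-\epsilon$ and $\epsilon\log\epsilon\to0$) and that $e^{-R}\log R\to0$ as $R\to\infty$. With these vanishing, the integration by parts collapses to $\Gamma'(1)=C_{1}+C_{2}$, and substituting back into the expression from the previous paragraph gives the stated asymptotic. Alternatively, the whole identification can simply be quoted from \cite{CJK10}, as indicated in the text.
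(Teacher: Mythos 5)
Your proof is correct. The paper itself offers no argument for this proposition --- it is simply recalled from \cite{CJK10} --- so your self-contained derivation is a welcome addition rather than a deviation. The substitution $s=u^{2}t$, the splitting at $s=1$ to extract the $-\log X$ divergence, and the identification $C_{1}+C_{2}=\int_{0}^{1}(e^{-s}-1)\,\frac{ds}{s}+\int_{1}^{\infty}e^{-s}\,\frac{ds}{s}=-\gamma=\Gamma^{\prime}(1)$ via integration by parts on $\Gamma^{\prime}(1)=\int_{0}^{\infty}e^{-s}\log s\,ds$ are all standard and carried out carefully; in particular you correctly handle the two boundary terms, noting that $(e^{-\epsilon}-1)\log\epsilon\to 0$ and $e^{-R}\log R\to 0$. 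This is exactly the classical exponential-integral (Frullani-type) computation one would expect behind the cited result, and nothing is missing.
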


We now turn to the proof of our main result, Theorem 1.

In view
of Theorem \ref{thmlogderivatives} we have\[
\log\det\Delta_{DT(\Lambda_{n})}=\det\Lambda_{n}\cdot c_{r}-\int_{0}^{\infty}\left(\theta_{\Lambda_{n}}(t)-\det\Lambda_{n}\cdot e^{-2rt}I_{0}(2t)^{r}-1+e^{-t}\right)\frac{dt}{t}.\]
After the  change of variables $t\rightarrow u_{n}^{2}t$, the
second term becomes
\begin{align*}
 & -\int_{0}^{\infty}\left(\theta_{\Lambda_{n}}(u_{n}^{2}t)-\det\Lambda_{n}\cdot e^{-2ru_{n}^{2}t}I_{0}(2u_{n}^{2}t)^{r}-1+e^{-u_{n}^{2}t}\right)\frac{dt}{t}\\
 & =-\left[\int_{0}^{1}+\int_{1}^{\infty}\right]\left(\theta_{\Lambda_{n}}(u_{n}^{2}t)-\det\Lambda_{n}\cdot e^{-2ru_{n}^{2}t}I_{0}(2u_{n}^{2}t)^{r}-1+e^{-u_{n}^{2}t}\right)\frac{dt}{t}.\end{align*}
In view of Propositions \ref{prop1infty}, \ref{prop01a},
and \ref{prop01b}, this integral equals \begin{align*}
 & -\int_{1}^{\infty}(\theta_{A}(t)-1)\frac{dt}{t}+\frac{2}{r}(4\pi)^{-r/2}-\int_{0}^{1}\left(\theta_{A}(t)-(4\pi t)^{-r/2}\right)\text{ }\frac{dt}{t}\\
 & -\Gamma^{\prime}(1)+2\log(u_{n})+o(1).\end{align*}
 Keeping in mind that $u_{n}=\left(\det\Lambda_{n}\right)^{1/r}$
and identifying the constant terms appearing in the meromorphic continuation
of $-\zeta^{\prime}(0)=\log\det\Delta_{A\mathbb{Z}^{r}\backslash\mathbb{R}^{r}}$, the
main theorem is proved; cf. equation (15) of \cite{CJK10} with $V(A)=\det A=1$.

\section{Proof of the corollaries}

To prove the corollaries in the introduction we recall the statements
from the literature that we use.

For Corollary \ref{coroptimal} note that the height has a global
minimum for the hexagonal lattice $A_{2}$ in dimension 2 as is well-known
and for the f.c.c. lattice $A_{3}\cong D_{3}$ in dimension 3 by the rigorous
numerics of Sarnak-Str{\"o}mbergsson in \cite{SS06}. Hence for any unimodular
lattice $L$ in the respective dimensions\begin{align*}
h(L) & \geq h(A_{2}),\text{ so }\log\det\Delta_{L}\leq\log\det\Delta_{A_{2}}\\
h(L) & \geq h(D_{3}),\text{ so }\log\det\Delta_{L}\leq\log\det\Delta_{D_{3}}.\end{align*}
 As already remarked the two leading terms in the asymptotics in Theorem
\ref{thmmain} are shape-independent, and so Corollary \ref{coroptimal}
follows from these remarks arguing with convergent subsequences in
view of the compactness.

For Corollary \ref{corestimate} note that Corollary 1 on p. 119 in
\cite{SS06} implies that\[
\log\det\Delta_{M}<\gamma-\log4\pi+\frac{2}{r}<-0.95\]
 where $M$ is an $r$-dimensional flat torus of volume 1 and $\gamma$
is Euler's constant $\gamma\approx0.577$. In view of this statement, let us replace
$\Lambda_{n}$ by a convergent subsequence.  Then by  Theorem \ref{thmmain}
and the matrix-tree theorem, we have \[
\tau(DT(\Lambda_{n}))=\frac{\det^{\prime}\Delta_{DT(\Lambda_{n})}}{\det\Lambda_{n}}\leq\frac{\left(\det\Lambda_{n}\right)^{2/r-1}}{4\pi}\exp(\det\Lambda_{n}\cdot c_{r}+\gamma+2/r)\]
 for all sufficiently large $n$.  This concludes the proof of Corollary \ref{corestimate}.  

Finally, it may be of interest to mention another estimate in \cite{SS06}:\[
h(L)\geq4\sqrt{\frac{\pi}{r}}\left(\frac{\sqrt{r/2\pi e}}{m(L)}\right)^{r}(1+o(1))\]
 where $m(L)$ is the length of the shortest non-zero vector in the
lattice $L$. Recall that being the densest regular packing is equivalent
to being the lattice with co-volume 1 which maximizes the length of
the shortest nonzero vector.

\noindent
Gautam Chinta \\
Department of Mathematics \\
The City College of New York \\
Convent Avenue at 138th Street \\
New York, NY 10031\\
U.S.A. \\
e-mail: chinta@sci.ccny.cuny.edu

\vspace{5mm} \noindent
Jay Jorgenson \\
Department of Mathematics \\
The City College of New York \\
Convent Avenue at 138th Street \\
New York, NY 10031\\
U.S.A. \\
e-mail: jjorgenson@mindspring.com

\vspace{5mm}\noindent
Anders Karlsson\\
Section de math\'ematiques\\
Universit\'e de Gen\`eve\\
2-4 rue du Li\`evre\\
1211 Gen\`eve 4\\
Switzerland\\
e-mail: Anders.Karlsson@unige.ch

\end{document}